\def\Q{\mathcal{Q}}
\def\P{\mathcal{P}}
\def\R{\mathcal{R}}
\def\T{\mathcal{T}}
\def\U{\mathcal{U}}
\def\Y{\mathcal{Y}}
\def\MinDel{\textup{MinDel}}
\def\mydef{\textup{def}}
\def\mysize{\textup{size}}
\def\TableComp{\textup{SolTable}}
\begin{document}

\title{Obtaining a Proportional Allocation by Deleting Items}
\titlerunning{Proportionality by Deleting Items}

\author{
  Britta Dorn\inst{1} 
\and 
  Ronald de Haan\inst{2}
\and
  Ildik\'o Schlotter\inst{3}
}
\institute{
  {University of T\"ubingen, Germany,
  \texttt{britta.dorn@uni-tuebingen.de}}
\and  
  {University of Amsterdam, the Netherlands,
  \texttt{R.deHaan@uva.nl}}
\and
  {Budapest University of Technology and Economics, Hungary,
  \texttt{ildi@cs.bme.hu}}
}

\maketitle

\begin{abstract}
We consider the following control problem on fair allocation of indivisible goods. Given a set $I$ of items and a set of agents, each having strict linear preference over the items, we ask for a minimum subset of the items whose deletion guarantees the existence of a proportional allocation in the remaining instance; we call this problem \textsc{Proportionality by Item Deletion (PID)}. Our main result is a polynomial-time algorithm that solves PID for three agents. By contrast, we prove that PID is computationally intractable when the number of agents is unbounded, even if the number $k$  of item deletions allowed is small, since the problem turns out to be $\mathsf{W}[3]$-hard with respect to the parameter $k$. Additionally, we provide some tight lower and upper bounds on the complexity of PID when regarded as a function of $|I|$ and $k$. 
\end{abstract}

\section{Introduction}
\label{sec:intro}

We consider a situation where a set $I$ of indivisible items needs to be allocated to a set $N$ of agents in a way that is perceived as \emph{fair}. Unfortunately, it may happen that a fair allocation does not exist in a setting. In such situations, we might be interested in the question how our instance can be modified in order to achieve a fair outcome. Naturally, we seek for a modification that is as small as possible. This can be thought of as a \emph{control action} carried out by a central agency whose task is to find a fair allocation. The computational study of such control problems was first proposed by Bartholdi, III et al.~\cite{bartholdi-control} for voting systems; our paper follows the work of Aziz et al.~\cite{ASW-ijcai} who have recently initiated the systematic study of control problems in the area of fair division.

The idea of fairness can be formalized in various different ways such as proportionality, envy-freeness, or max-min fair share. Here we focus on \emph{proportionality}, a notion originally defined in a model where agents use utility functions to represent their preferences over items. In that context, an allocation is called proportional if each agent obtains a set of items whose utility is at least $1/|N|$ of their total utility of all items. One way to adapt this notion to a model with linear preferences (not using explicit utilities) is to look for an allocation that is proportional with respect to \emph{any} choice of utility functions for the agents that is compatible with the given linear preferences (see Aziz et al.~\cite{AGMW-2015} for a survey of other possible notions of proportionality and fairness under linear preferences). 
Aziz et al.~\cite{AGMW-2015} referred to this property as ``necessary proportionality''; for simplicity, we use the shorter term ``proportionality.''

We have two reasons for considering linear preferences. First, an important  advantage of this setting is the easier elicitation of agents' preferences, which enables for more practical applications. Second, this simpler model is more tractable in a computational sense: under linear preferences, the existence of a proportional allocation can be decided in polynomial time~\cite{AGMW-2015}, whereas the same question for cardinal utilities is \textsf{NP}-hard already for two agents~\cite{lipton-proportionality}. Clearly, if already the existence of a proportional allocation is computationally hard to decide, then we have no hope to solve the corresponding control problem efficiently.

Control actions can take various forms. Aziz et al.~\cite{ASW-ijcai} mention several possibilities: control by adding/deleting/replacing agents or items in the given instance, or by partitioning the set of agents or items. In this paper we concentrate only on control by \emph{item deletion}, where the task is to find a subset of the items, as small as possible, whose removal from the instance guarantees the existence of a proportional allocation. 
In other words, we ask for the maximum number of items that can be allocated to the agents in a proportional way. 

\subsection{Related Work}

We follow the research direction proposed by Aziz et al.~\cite{ASW-ijcai} who initiated the study of control problems in the area of fair division. 
As an example, Aziz et al.~\cite{ASW-ijcai} consider the complexity of obtaining envy-freeness by adding or deleting items or agents, assuming linear preferences. They show that adding/deleting a minimum number of items to ensure envy-freeness can be done in polynomial time for two agents, while for three agents it is \textsf{NP}-hard even to decide if an envy-free allocation exists. As a consequence, they obtain \textsf{NP}-hardness also for the control problems where we want to ensure envy-freeness by  adding/deleting items in case there are more than two agents, or by adding/deleting agents. 

The problem of deleting a minimum number of items to obtain envy-freeness was first studied by Brams et al.~\cite{brams-AL} who gave a polynomial-time algorithm for the case of two agents.\footnote{
For a complete proof of the correctness of their algorithm, see also~\cite{ASW-ijcai}.
} In the context of cake cutting, Segal-Halevi et al.~\cite{segal-halevi-cakecutting} proposed the idea of distributing only a portion of the entire cake in order to obtain an envy-free allocation efficiently. 
For the  Hospitals/Residents with Couples problem, Nguyen and Vohra~\cite{nguyen-vohra-hrc} considered another type of control action: they obtained stability by slightly perturbing the capacities of hospitals.

\subsection{Our Contribution}
We first consider the case where the number of agents is unbounded (see Section~\ref{sec:unbounded-agents}). We show that the problem of deciding whether there exist at most $k$ items whose deletion allows for a proportional allocation is \textsf{NP}-complete, and also $\mathsf{W}[3]$-hard with parameter~$k$ (see Theorem~\ref{thm-w3-hardness}). 
This latter result shows that even if we allow only a few items to be deleted, we cannot expect an efficient algorithm, since the problem is not fixed-parameter tractable with respect to the parameter $k$ (unless $\mathsf{FPT} = \mathsf{W}[3]$).

Additionally, we provide tight upper and lower bounds on the complexity of the problem. 
In Theorem~\ref{thm-linear-w2-hardness} we prove that the trivial $|I|^{O(k)}$ time algorithm---that, in a brute force manner, checks for each subset of $I$ of size at most $k$ whether it is a solution---is essentially optimal (under the assumption $\mathsf{FPT} \neq \mathsf{W}[1]$). We provide another simple algorithm in Theorem~\ref{tight-exp-algo} that has optimal running time, assuming the Exponential Time Hypothesis. 

In Section~\ref{sec:threeagents}, we turn our attention to the case with only three agents. In Theorem~\ref{thm-3agents-poly} we propose a polynomial-time algorithm for this case, which can be viewed as our main result. The presented algorithm is based on dynamic programming, but relies heavily on a non-trivial insight into the structure of solutions. 

\section{Preliminaries}
We assume the reader to be familiar with basic complexity theory, in particular with parameterized complexity~\cite{DowneyFellows13}.

\medskip
\noindent
{\bf Preferences.}
Let $N$ be a set of agents and $I$ a set of indivisible items that we wish to allocate to the agents in some way. We assume that each agent $a \in N$ has strict 
preferences over the items, expressed by a preference list $L^a$ that is a linear ordering of $I$, and set $L=\{L^x\mid x \in N\}$. We call the triple $(N,I,L)$ a \emph{(preference) profile}. 
We denote by $L^a[i:j]$ the subsequence of $L^a$ containing the items ranked by agent $a$ between the positions $i$ and $j$, 
inclusively, for any $1 \leq i \leq j \leq |I|$.  Also, for a subset~$X \subseteq I$ of items 
we denote by~$L^a_X$ the restriction of~$L^a$ to the items in~$X$. 

\medskip
\noindent
{\bf Proportionality.}
Interestingly, the concept of proportionality (as described in Section~\ref{sec:intro}) has an equivalent definition that is more direct and practical: we say that an allocation $\pi: I \rightarrow N$ mapping items to agents is \emph{proportional} if for any integer $i \in \{1, \dots, |I|\}$ and any agent $a \in N$, the number of items from~$L^a[1:i]$ allocated to~$a$ by~$\pi$ is at least $i/|N|$. 
Note that, in particular, this means that in a proportional allocation, each agent needs to get his or her first choice. Another important observation is that a proportional allocation can only exist if the number of items is a multiple of $|N|$, since each agent needs to obtain at least $|I|/|N|$ items. 

\medskip
\noindent
{\bf Control by deleting items.}
Given a profile $\mathcal{P}=(N,I,L)$ and a subset $U$ of items, we can define the preference profile  $\mathcal{P}-U$ obtained by removing all items in~$U$ from $I$ and from all preference lists in $L$.
Let us define the \textsc{Proportionality by Item Deletion (PID)} problem as follows.
Its input is a pair $(\mathcal{P},k)$ where $\mathcal{P}=(N,I,L)$ is a preference profile and $k$ is an integer. 
We call a set $U \subseteq I$ of items a \emph{solution} for $\P$
if its removal from $I$ allows for proportionality, that is, 
if there exists a proportional allocation $\pi:I \setminus U \rightarrow N$ 
for $\mathcal{P}-U$. 
The task in PID is to decide if there exists a solution of size at most $k$.

\section{Unbounded Number of Agents}
\label{sec:unbounded-agents}

Since the existence of a proportional allocation can be decided in polynomial time 
by standard techniques in matching theory \cite{AGMW-2015}, 
the \textsc{Proportional Item Deletion} problem is solvable in $|I|^{O(k)}$ time 
by the brute force algorithm that checks for each subset of $I$ 
of size at most $k$ whether it is a solution. 
In terms of parameterized complexity, this means that PID is in $\mathsf{XP}$
when parameterized by the solution size. 

Clearly, such a brute force approach may only be feasible if the number~$k$ of items we are allowed to delete is very small. Searching for a more efficient algorithm, one might ask whether the problem becomes fixed-parameter tractable with $k$ as the parameter, 
i.e., whether there exists an algorithm for PID that, for an instance $(\mathcal{P},k)$ runs in time $f(k) |\mathcal{P}|^{O(1)}$ for some computable function $f$. Such an algorithm could be much faster in practice compared to the brute force approach described above. 

Unfortunately, the next theorem shows that finding such a fixed-parameter tractable algorithm seems unlikely, as PID is $\mathsf{W}[2]$-hard with parameter $k$. Hence, deciding whether the deletion of $k$ items can result in a profile admitting a proportional allocation is computationally intractable even for small values of $k$. 

\begin{theorem}
\label{thm-w2hard}
\textsc{Proportionality by Item Deletion} is \textsf{NP}-complete and $\mathsf{W}[2]$-hard when parameterized by the size $k$ of the desired solution. 
\end{theorem}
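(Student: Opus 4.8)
The plan is to prove both NP-hardness and $\mathsf{W}[2]$-hardness simultaneously via a single parameterized reduction, since a $\mathsf{W}[2]$-hard problem is automatically NP-hard. The natural source problem is \textsc{Hitting Set} (equivalently \textsc{Dominating Set}), which is the canonical $\mathsf{W}[2]$-complete problem parameterized by solution size. Let me think about what the reduction needs to accomplish.

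Given a \textsc{Hitting Set} instance with universe $E = \{e_1,\dots,e_n\}$, a family $\mathcal{S} = \{S_1,\dots,S_m\}$ of subsets of $E$, and a budget $k$, I want to build a preference profile $\mathcal{P} = (N,I,L)$ so that $\mathcal{P}$ admits a proportional allocation after deleting at most $k$ items if and only if there is a hitting set of size at most $k$. The guiding intuition is that the deleted items should correspond to chosen universe elements, so I would introduce one item $x_e$ per element $e \in E$, and the solution set $U$ should be exactly (or essentially) $\{x_e : e \text{ chosen}\}$. The difficulty is that PID requires very rigid structure: proportionality forces each agent to receive its top choice, forces the surviving item count to be a multiple of $|N|$, and more generally forces, for every prefix length $i$, that each agent gets at least $i/|N|$ of its top-$i$ items. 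I would exploit this rigidity by using padding/dummy items and carefully crafted preference lists to create ``obstruction gadgets'': for each set $S_j$ I construct a configuration in the preference lists that blocks any proportional allocation unless at least one element-item $x_e$ with $e \in S_j$ is deleted. This mimics the covering condition of \textsc{Hitting Set}.

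The key steps, in order, would be: first, fix the number of agents (it must grow with the instance, consistent with the ``unbounded agents'' setting of this section) and decide how divisibility constraints are met, likely by adding universally-agreed filler items so that after deleting exactly $k$ items the total is divisible by $|N|$. Second, design the element-items $x_e$ so that each carries a ``cost'' of exactly one toward the budget, ensuring the deleted set corresponds to a hitting set of size $\le k$; one typically enforces this by making every $x_e$ sit in a position where its presence, if not deleted, is harmless, but where the set gadgets collectively demand coverage. Third, design, for each set $S_j$, an agent (or small group of agents) whose preference list places the items $\{x_e : e \in S_j\}$ together with a scarce high-priority item in such a way that a proportional allocation is feasible exactly when one of those $x_e$ has been removed—this is the heart of the reduction and the part I expect to require the most care. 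Fourth, prove the two directions of correctness: a hitting set yields a deletion set admitting a proportional allocation (constructively exhibit the allocation, checking the prefix condition for every $i$ and every agent), and conversely any size-$k$ deletion set admitting a proportional allocation can be transformed, without increasing size, into one of the form $\{x_e : e \in H\}$ for a genuine hitting set $H$.

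The main obstacle will be the prefix condition of proportionality: unlike a simple matching constraint, proportionality must hold for \emph{every} prefix length $i$ simultaneously, so the gadgets cannot be analyzed locally. I expect that verifying the forward direction (that a hitting set really does yield a proportional allocation, with the prefix inequality $|\{j \le i : \pi(L^a[j]) = a\}| \ge i/|N|$ satisfied for all $i$ and all $a$) is where the technical bookkeeping concentrates, and the converse requires an exchange/normalization argument showing that deleting any filler or non-element item can be replaced by deleting an element-item without loss. A secondary subtlety is maintaining divisibility of the surviving item count by $|N|$ across all candidate deletion sets of size $k$; I would neutralize this by padding the instance with blocks of items that every agent ranks identically at the bottom, so that their deletion or retention never interferes with the gadget logic and can always be tuned to restore divisibility.
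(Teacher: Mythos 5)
Your overall architecture---an FPT-reduction from \textsc{Hitting Set}/\textsc{Dominating Set} with one item per universe element, per-set gadgets that block proportionality unless a deletion hits the set, and padding to control divisibility---is exactly the shape of the paper's proof, which reduces from $k$-\textsc{Dominating Set}. But the proposal stops short of the actual content: the gadget you yourself call ``the heart of the reduction'' is never constructed, and the mechanism you hint at (an agent whose list combines $\{x_e : e \in S_j\}$ with a ``scarce high-priority item,'' feasible exactly when some $x_e$ is deleted) is not a working design as stated. The paper's gadget works by a pigeonhole in the opposite direction: for each vertex $v$ it creates $|N[v]|+1$ agents, each with its \emph{own} undeletable first-choice item, then filler first-choice items of other agents, then the $|N[v]|$ element items of $N[v]$ (so that positions $2,\dots,|N|+1$ contain only first-choice items and element items), then a \emph{private} escape dummy $d_v^j$. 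Proportionality forces every agent to receive at least two items from its top $|N|+1$; since every first-choice item is pinned to its owner, the $|N[v]|+1$ gadget agents would need $|N[v]|+1$ distinct items from a pool of size $|N[v]|$---impossible unless a deletion inside the set pulls each agent's private dummy into its top-$(|N|+1)$ prefix. Scarcity of a single coveted item cannot create this effect: absent some further global argument, a lone gadget agent can simply be handed one of the element items, so the blocking property you need does not follow from the design you describe.

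Second, your converse direction (``exchange/normalization'') is also a placeholder, and it conceals two constructions the reduction cannot do without: (a) a reason why first-choice items can never be profitably deleted (in the paper: every preference list begins with more than $k$ items of $F$, so deleting $f(a)$ makes agent $a$'s new top item some other agent's first choice, an immediate conflict since that item must go to both agents); and (b) a mechanism forcing any solution of size at most $k$ to consist of exactly $k$ element items (in the paper: $n+1$ ``selection agents'' who all rank the $n$ element items within positions $2,\dots,|N|+1$, with escape dummies $c_1,\dots,c_{k+1}$ that become reachable only after $k$ element-item deletions, together with the count $|I| = 2|N|+k$, which makes the divisibility requirement force $|S| = k$ exactly). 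Your ``tunable bottom filler'' idea for divisibility is strictly weaker than (b): it fixes the arithmetic but gives no control over \emph{which} items a solution deletes, which is what the converse direction actually needs. Without (a), (b), and a concrete gadget, neither direction of the correctness proof can be carried out; what you have written is a plan whose deferred parts are precisely where all the work of the theorem lies.
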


\begin{proof}
We are going to present an FPT-reduction from the $\mathsf{W}[2]$-hard problem \textsc{$k$-Dominating Set}, 
where we are given a graph $G=(V,E)$ and an integer $k$ and the task is to decide if 
$G$ contains a dominating set of size at most $k$; a vertex set $D \subseteq V$ is \emph{dominating} in $G$ if each vertex in $V$ is either in $D$ or has a neighbor in $D$. We denote by $N(v)$ the set of neighbors of some vertex $v \in V$, and we let $N[v]=N(v) \cup \{v\}$. Thus, a vertex set $D$ is dominating if $N[v] \cap D \neq \emptyset$ holds for each $v \in V$. 

Let us construct an instance $I_{\textup{PID}}=(\mathcal{P},k)$ of PID with $\mathcal{P}=(N,I,L)$ as follows. 
We let $N$ contain $3n+2m+1$ agents where $n=|V|$ and $m=|E|$: 
we create~$n+1$ so-called \emph{selection agents} $s_1, \dots, s_{n+1}$,
and for each $v \in V$ we create a set $A_v=\{a_v^j \mid 1 \leq j \leq |N[v]|+1\}$
of \emph{vertex agents}.
Next we let $I$ contain $2|N|+k$ items: we create 
distinct first-choice items $f(a)$ for each agent $a \in N$, 
a \emph{vertex item} $i_v$ for each $v \in V$,
a dummy item $d_v^j$ for each vertex agent $a_v^j \in N$, 
and~$k+1$ additional dummy items $c_1, \dots, c_{k+1}$.

Let $F$ denote the set of all first-choice items, i.e., $F=\{f(a) \mid a \in N\}$.
For any set $U \subseteq V$ of vertices in $G$, let $I_U=\{i_v \mid v \in U\}$; 
in particular, $I_V$ denotes the set of all vertex items.

Before defining the preferences of agents, we need some additional notation. 
We fix an arbitrary ordering $\prec$ over the items, and 
for any set $X$ of items we let $[X]$ denote the ordering of $X$ according to $\prec$. 
Also, for any $a \in N$, we define the set $F^a_i$ as the first $i$ elements of $F \setminus \{f(a)\}$, 
for any $i \in \{1, \dots, |N|-1\}$. 
We end preference lists below with the symbol `$\dots$' meaning 
all remaining items not listed explicitly, ordered according to $\prec$.

Now we are ready to define the preference list $L^a$ for each agent $a$. 
\begin{itemize}
\item If $a$ is a selection agent $a=s_i$ with $1 \leq i \leq n-k$, then let
$$L^{a}: 
		f(a), \underbrace{[F^a_{|N|-n}], [I_V]}_{\textrm{\scriptsize $|N|$ items}}, 
		\underbrace{[F^a_{|N|-n+k} \setminus  F^a_{|N|-n}]}_{\textrm{\scriptsize $k$ items}}, \dots $$
\item If $a$ is a selection agent $a=s_i$ with $n-k < i \leq n+1$, then let 
$$L^{a}: 
		f(a), \underbrace{[F^a_{|N|-n}], [I_V]}_{\textrm{\scriptsize $|N|$ items}}, 
		\underbrace{[F^a_{|N|-n+k-1} \setminus  F^a_{|N|-n}]}_{\textrm{\scriptsize $k-1$ items}}, c_{i-(n-k)}, \dots $$
\item If $a$ is a vertex agent $a=a_v^j$ with $1 \leq j \leq |N[v]|+1$, then let
$$L^{a}: 
       f(a), \underbrace{[F^a_{|N|-|N[v]|}], [I_{N[v]}]}_{\textrm{\scriptsize $|N|$ items}}, 
       d_v^j, \dots $$
\end{itemize}
This finishes the definition of our PID instance $I_{\textup{PID}}$.

Suppose that there exists a solution $S$ of size at most $k$ to $I_{\textup{PID}}$ 
and a proportional allocation $\pi$ mapping the items of $I \setminus S$ to the agents in $N$. 
Observe that by $|I|=2|N|+k$, we know that $S$ must contain exactly $k$ items. 

First, we show that $S$ cannot contain any item from $F$. 
For contradiction, assume that $f(a) \in S$ for some agent $a$. Since the preference list of $a$
starts with more than $k$ items from $F$ (by $N-n>k$), the first item in $L^a_{I \setminus S}$ 
must be an item $f(b)$ for some $b \in N$, $b \neq a$. The first item in $L^b_{I \setminus S}$ 
is exactly $f(b)$, and thus any proportional allocation should allocate 
$f(b)$ to both~$a$ and~$b$, a contradiction.

Next, we prove that $S \subseteq I_V$. For contradiction, 
assume that $S$ contains less than $k$ items from $I_V$. 
Then, after the removal of $S$, the top $|N|+1$ items in the preference list 
$L^{s_i}_{I \setminus S}$ of any selection agent $s_i$ are all contained in $I_V \cup F$. 
Hence, $\pi$ must allocate at least two items from $I_V \cup F$ to $s_i$, by the definition of proportionality. 
Recall that for any agent $a$, $\pi$ allocates $f(a)$ to $a$, meaning that~$\pi$ 
would need to distribute the $n$ items in $I_V$ among the $n+1$ selection agents, a contradiction. 
Hence, we have $S \subseteq I_V$. 

We claim that the $k$ vertices $D=\{v \mid i_v \in S\}$ form a dominating set in $S$. 
Let us fix a vertex $v \in V$. For sake of contradiction, 
assume that $N[v] \cap D =\emptyset$, and consider any vertex agent $a$ in $A_v$. 
Then the top $|N|+1$ items in $L^a_{I \setminus S}$
are the same as the top $|N|+1$ items in $L^a=L^a_I$ (using that $S \cap F=\emptyset$), 
and these items form a subset of $I_{N[v]} \cup F$ for every $a \in A_v$. 
But then arguing as above, we get that $\pi$ would need to allocate an item of $I_{N[v]}$ to
each of the $|N[v]|+1$ vertex agents in $A_v$; again a contradiction. 
Hence, we get that $N[v] \cap D \neq \emptyset$ for each $v \in V$, showing that $D$ is 
indeed a dominating set of size $k$. 

For the other direction, let $D$ be a dominating set of size $k$ in $G$, 
and let $S$ denote the set of $k$ vertex items $\{i_v \mid v \in D\}$. 
To prove that $S$ is a solution for~$I_{\textup{PID}}$, we define a proportional allocation $\pi$ in 
the instance obtained by removing~$S$. 
First, for each selection agent $s_i$ with $1 \leq i \leq n-k$, we let $\pi$ 
allocate $f(s_i)$ and the $i$th item from $I_{V \setminus D}$ to $s_i$ .
Second, for each selection agent $s_{n-k+i}$ with $1 \leq i \leq k+1$, we let $\pi$ 
allocate $f(s_{n-k+i})$ and the dummy item $c_i$ to $s_{n-k+i}$.
Third, $\pi$ allocates the items $f(a_v^j)$ and $d_v^j$ to each vertex agent $a_v^j \in N$.

It is straightforward to check that $\pi$ is indeed proportional. 

For proving $\mathsf{NP}$-completeness, observe that the presented FPT-reduction is a polynomial reduction as well, so the $\mathsf{NP}$-hardness of \textsc{Dominating Set} implies that PID is $\mathsf{NP}$-hard as well; since for any subset of the items we can verify in polynomial time whether it yields a solution, containment in $\mathsf{NP}$ follows.
\qed
\end{proof}

In fact, we can strengthen the $\mathsf{W}[2]$-hardness result of Theorem~\ref{thm-w2hard} and show that PID is even $\mathsf{W}[3]$-hard with respect to parameter $k$.\footnote{We present Theorem~\ref{thm-w2hard} so that we can re-use its proof for Theorems~\ref{thm-linear-w2-hardness} and \ref{tight-exp-algo}.
 }

\begin{theorem}
\label{thm-w3-hardness}
\textsc{Proportionality by Item Deletion} is $\mathsf{W}[3]$-hard when parameterized by the size $k$ of the desired solution. \end{theorem}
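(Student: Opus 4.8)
The plan is to prove $\mathsf{W}[3]$-hardness by an FPT-reduction from a canonical $\mathsf{W}[3]$-complete problem, mimicking the three-level quantifier structure that $\mathsf{W}[3]$ captures. The natural source problem is the weighted satisfiability problem for $\Gamma_{3,1}$-circuits (equivalently, a problem whose instances encode a $\exists\forall\exists$ pattern), or a combinatorial $\mathsf{W}[3]$-complete problem such as deciding whether a $3$-partite, $3$-uniform hypergraph-style selection exists. Concretely, I would reduce from the problem of, given a bipartite-style incidence structure with three ``layers,'' choosing $k$ elements in the first layer so that a certain domination-like condition propagates through two further layers. The key idea is that the gadget used in the proof of Theorem~\ref{thm-w2hard} already forces any size-$k$ solution $S$ to consist exactly of vertex items (i.e.\ $S \subseteq I_V$), and the vertex-agent gadgets $A_v$ enforce a \emph{local covering} constraint of exactly the form $N[v] \cap D \neq \emptyset$. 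The task is therefore to layer this covering mechanism so that it expresses the nested quantifier alternation of $\mathsf{W}[3]$.

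First I would recall the $\Gamma_{3,1}$-weighted-satisfiability formulation: we are given a monotone Boolean formula (or circuit) of the form $\bigwedge_i \bigvee_j \bigwedge_\ell x_{ij\ell}$ and an integer $k$, and we must decide whether there is a weight-$k$ satisfying assignment. The plan is to build a PID instance in which the selected items (forced into $S \subseteq I_V$ as before) correspond to setting $k$ input variables true, and then to introduce two successive families of agent-gadgets that propagate the satisfaction condition through the two remaining levels of the circuit. The outermost $\bigwedge_i$ level will be encoded by a set of ``constraint agents,'' one per clause $i$, each of which can only be satisfied proportionally if at least one of the inner $\bigvee_j$ disjuncts is met; each such disjunct in turn is encoded by a further agent-gadget that fires precisely when all of its conjuncts $\bigwedge_\ell x_{ij\ell}$ are selected. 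I would build these gadgets so that, just as in Theorem~\ref{thm-w2hard}, an agent's top $|N|+1$ items lie in a small pool of shared items together with $F$, forcing $\pi$ to allocate a scarce shared item to that agent, which is possible for all agents simultaneously exactly when the propagation succeeds.

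The crucial technical steps, in order, are: (i)~reuse the first-choice gadget $F$ and the counting argument $|I| = 2|N| + k$ so that any solution has size exactly $k$, contains no first-choice item, and consists entirely of the ``variable items''; (ii)~design the level-$1$ (conjunct) gadgets so that a shared scarce item becomes available to the level-$2$ (disjunct) gadget \emph{only} when all conjunct-items of that disjunct are deleted; (iii)~design the level-$2$ gadgets so that a disjunct passes a scarce item up to the level-$3$ (clause) gadget when at least one of its disjuncts has fired; and (iv)~make each clause agent require such an item to achieve proportionality, so that proportionality of the whole instance is equivalent to satisfaction of the formula. Throughout, I would control the available ``slack'' in each agent's preference list by padding with copies of $F$ and with private dummy items so that the proportionality threshold $i/|N|$ bites at exactly the intended positions, exactly as in the two preceding constructions.

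The main obstacle I anticipate is implementing the \emph{monotone-but-negatable} propagation of satisfaction through the layers while keeping the parameter bounded solely by $k$. The $\mathsf{W}[2]$-hardness of Theorem~\ref{thm-w2hard} encodes a single covering (one universal-like layer), whereas $\mathsf{W}[3]$ requires an additional alternation, and the difficulty is to force a conjunction of disjunctions of conjunctions using only the ``at-least-one-scarce-item-per-agent'' primitive that proportionality naturally provides, which is inherently a covering (OR-like) constraint. Faithfully realising the inner conjunction---``a disjunct fires \emph{only if all} its conjuncts are selected''---via this OR-like primitive is the delicate part: it forces introducing auxiliary items whose deletion would otherwise be ``free'' and must be blocked (e.g.\ by the $k+1$ private dummy-item trick $c_1,\dots,c_{k+1}$ used in Theorem~\ref{thm-w2hard} to make certain items undeletable within budget $k$). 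Verifying that all agents can be satisfied simultaneously in the ``yes'' direction, and that proportionality forces the full nested condition in the ``no'' direction, will require a careful global counting argument ensuring no two agents compete for the same scarce item; I would handle this exactly as the pigeonhole arguments in the proof above, scaling the number of selection-style and vertex-style agents so the counts match.
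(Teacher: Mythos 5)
Your overall gadget philosophy closely matches the paper's: reuse of the first-choice items $F$, the count $|I|=2|N|+k$ forcing $|S|=k$ and $S$ to consist only of variable items, the $c_1,\dots,c_{k+1}$ dummies, and a pigeonhole argument on scarce shared items. However, there is a genuine gap at the very start: your concrete source problem is wrong. You commit to the weighted satisfiability of \emph{monotone} depth-$3$ formulas $\bigwedge_i \bigvee_j \bigwedge_\ell x_{ij\ell}$ with \emph{positive} literals. By the Monotone/Antimonotone Collapse Theorems of Downey and Fellows, the monotone version at an \emph{odd} level of the hierarchy collapses one level down: weighted satisfiability for monotone $\bigwedge\bigvee\bigwedge$ formulas is $\mathsf{W}[2]$-complete, not $\mathsf{W}[3]$-complete. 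So even if your reduction were carried out flawlessly, it would only re-prove Theorem~\ref{thm-w2hard}. The paper instead reduces from \textsc{wcs}$^-$[3], the \emph{antimonotone} variant in which every literal $l_{i,j,\ell}$ is negative; it is this variant that is $\mathsf{W}[3]$-hard, and the parity of the hierarchy level makes this choice essential rather than cosmetic.

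The antimonotone structure is also what makes the gadget mechanics clean, and your step~(ii) would have to change accordingly. With negative literals, deleting an item corresponds to setting a variable true, so a disjunct $\bigwedge_\ell \neg x_{i,j,\ell}$ is satisfied exactly when the $k$ deleted items \emph{avoid} its variables, i.e., all $k$ deletions land in the complement set $C_{i,j}$. The paper places $[C_{i,j}]$ (together with $F$-items and the shared items $Y'_i$) in the first $|N|+k+1$ positions of verification agent $a_i^j$'s list, followed by a private item $y_{i,m_{2,i}}$; then ``all $k$ deletions fall in the prefix'' is precisely the condition under which $y_{i,m_{2,i}}$ rises into the top $|N|+1$ positions, making the agent satisfiable without touching $Y'_i$. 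Note also that the paper needs no separate ``clause agents'' and no propagation of items between gadget levels, which is the part of your plan that proportionality does not obviously support: the outer $\bigwedge_i$ is enforced simply because \emph{all} agents must be served, and the middle $\bigvee_j$ is enforced by giving the $m_{2,i}$ agents of $A_i$ only $m_{2,i}-1$ shared items $Y'_i$, so at least one agent must have its private item pulled up. To repair your proposal, replace the source problem by the antimonotone \textsc{wcs}$^-$[3] and flatten your three-level agent hierarchy into this single family of verification agents per clause.
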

\begin{proof}
We are going to present an FPT-reduction from the $\mathsf{W}[3]$-hard \textsc{wcs}$^-$[3] problem, which is the weighted
satisfiability problem for formulas of the
form $\varphi = \bigwedge\nolimits_{i=1}^{m_1} \bigvee\nolimits_{j=1}^{m_{2,i}}
\bigwedge\nolimits_{\ell=1}^{m_{3,i,j}} l_{i,j,\ell}$,
where each $l_{i,j,\ell}$ is a negative literal
\cite{DowneyFellows95,FlumGrohe05,FlumGrohe06}.

Let $(\varphi,k)$ be an instance of the weighted satisfiability problem,
where $\varphi$ is a formula of the form described above.
Let $X = \{ x_1,\dotsc,x_n \}$ be the set of all variables occurring in $\varphi$---%
that is, $n$ denotes the number of variables in $\varphi$.
We will construct an instance $I_{\textup{PID}}=(\mathcal{P},k)$ of PID
with $\mathcal{P}=(N,I,L)$ as follows. 
We let $N$ contain $n+1+\sum\nolimits_{i = 1}^{m_1} m_{2,i}$ agents: 
we create $n+1$ so-called \emph{selection agents} $s_1,\dotsc,s_{n+1}$,
and for each $1 \leq i \leq m_1$
we create a set $A_i =\{a_i^j \mid 1 \leq j \leq m_{2,i}\}$
of \emph{verification agents}.
Next we let $I$ contain $2|N|+k$ items: we create 
distinct \emph{first-choice items} $f(a)$ for each agent $a \in N$, 
a \emph{variable item} $w_u$ for each $1 \leq u \leq n$,
$m_{2,i}$ \emph{verification items} $y_{i,1},\dotsc,y_{i,m_{2,i}}$ for each $1 \leq i \leq m_1$,
and $k+1$ \emph{dummy items} $c_1, \dots, c_{k+1}$.

Let $F$ denote the set of all first-choice items, i.e., $F=\{f(a) \mid a \in N\}$.
For any subset $X' \subseteq X$ of variables, let $W_{X'}=\{w_u \mid x_u \in X' \}$; 
in particular, $W_X$ denotes the set of all variable items. 

Before defining the preferences of agents, we need some additional notation. 
We fix an arbitrary ordering $\prec$ over the items, and 
for any set $Z$ of items we let $[Z]$ denote the ordering of $Z$ according to $\prec$. 
Also, for any $a \in N$, we define the set $F^a_i$ as the first $i$ elements of $F \setminus \{f(a)\}$, 
for any $i \in \{1, \dots, |N|-1\}$. 
Moreover, for any $1 \leq i \leq m_1$ we define the sets
$Y_i = \{ y_{i,1},\dotsc,y_{i,m_{2,i}} \}$
and $Y'_i = \{ y_{i,1},\dotsc,y_{i,m_{2,i}-1} \}$.
We end preference lists below with the symbol `$\dots$' meaning 
all remaining items not listed explicitly, ordered according to $\prec$.

Now we are ready to define the preference list $L^a$ for each agent $a$. 
\begin{itemize}
\item If $a$ is a selection agent $a=s_i$ with $1 \leq i \leq n-k$, then let
$$L^{a}:
f(a), \underbrace{[F^a_{|N|-n}], [W_X]}_{\textrm{\scriptsize $|N|$ items}}, 
\underbrace{[F^a_{|N|-n+k} \setminus  F^a_{|N|-n}]}_{\textrm{\scriptsize $k$ items}},
\dots $$
\item If $a$ is a selection agent $a=s_i$ with $n-k < i \leq n+1$, then let 
$$L^{a}: 
f(a), \underbrace{[F^a_{|N|-n}], [W_X]}_{\textrm{\scriptsize $|N|$ items}}, 
\underbrace{[F^a_{|N|-n+k-1} \setminus  F^a_{|N|-n}]}_{\textrm{\scriptsize $k-1$ items}}, c_{i-(n-k)},
\dots $$
\item If $a$ is a verification agent $a=a_i^j$ for $1 \leq i \leq m_1$
and $1 \leq j \leq m_{2,i}$, then let
$$L^{a}: 
f(a), \underbrace{[F^a_{|N|-|C_{i,j}|-|Y'_i|+k}], [Y'_i], [C_{i,j}]}_{\textrm{\scriptsize $|N|+k$ items}},
y_{i,m_{2,i}},
\dots $$
where $C_{i,j} = X \setminus \{ x \in X \mid 1 \leq \ell \leq m_{3,i,j}, l_{i,j,\ell} = \neg x \}$
is the set of variables that do not occur in any literal $l_{i,j,\ell}$,
for $1 \leq \ell \leq m_{3,i,j}$.
\end{itemize}
This finishes the definition of our PID instance $I_{\textup{PID}}$.

Suppose that there exists a solution $S$ of size at most $k$ to $I_{\textup{PID}}$ 
and a proportional allocation $\pi$ mapping the items of $I \setminus S$ to the agents in $N$. 
Observe that by $|I|=2|N|+k$, we know that $S$ must contain exactly $k$ items. 

First, we show that $S$ cannot contain any item from $F$. 
To derive a contradiction, assume that $f(a) \in S$ for some agent $a$.
We can safely assume that $|N|-n>k$
and that $|N| - n > m_{2,i}$ for each $1 \leq i \leq m_1$.
As a result, the preference list of $a$
starts with more than $k$ items from $F$.
Therefore, the first item in $L^a_{I \setminus S}$ 
must be an item~$f(b)$ for some $b \in N$, $b \neq a$.
Clearly, the first item in $L^b_{I \setminus S}$ 
is exactly $f(b)$, which means that any proportional allocation should allocate 
$f(b)$ to both $a$ and $b$, which is a contradiction.

Next, we prove that $S \subseteq W_X$. To derive a contradiction, 
assume that $S$ contains less than $k$ items from $W_X$. 
Then, after the removal of $S$, the top $|N|+1$ items in the preference list 
$L^{s_i}_{I \setminus S}$ of any selection agent $s_i$ are all contained in $W_X \cup F$. 
Hence, $\pi$ must allocate at least two items from $W_X \cup F$ to each $s_i$,
by the definition of proportionality. 
Recall that for any agent $a$, $\pi$ allocates $f(a)$ to $a$, meaning that $\pi$ 
would need to distribute the $n$ items in $W_X$ among
the $n+1$ selection agents, which is a contradiction. 
Hence, we have $S \subseteq W_X$. 
We also get that $\pi$ must allocate all items 
in $W_X \setminus S \cup \{c_1, \dots, c_{k+1}\}$ to the selection agents.

Consider the truth assignment $\alpha : X \rightarrow \{ 0,1 \}$
defined by letting $\alpha(x_u) = 1$ if and only if $w_u \in S$,
for each $x_u \in X$.
Since $|S| = k$, the truth assignment $\alpha$ has weight $k$.
We show that $\alpha$ satisfies $\varphi$.
To do so, we need to show that for each $1 \leq i \leq m_1$
it holds that $\alpha$ satisfies
$\varphi_i = \bigvee\nolimits_{j=1}^{m_{2,i}}
\bigwedge\nolimits_{\ell=1}^{m_{3,i,j}} l_{i,j,\ell}$.
Take an arbitrary $1 \leq i \leq m_1$.
To derive a contradiction, assume that for each $1 \leq j \leq m_{2,i}$
it holds that there is some $1 \leq \ell \leq m_{3,i,j}$
such that $l_{i,j,\ell}$ is made false by $\alpha$.
Then for each such $1 \leq j \leq m_{2,i}$ it holds
that $|C_{i,j} \cap S| < k$.
Then for each verification agent $a_i^j$, for $1 \leq j \leq m_{2,i}$
it holds that the top $|N|+1$ items in $L^{a}_{I \setminus S}$ (for $a = a_i^j$)
form a subset of $Y'_i \cup W_X \cup F$.
Then arguing as above, we get that $\pi$ would need to allocate an item of $Y'_i$ to
each of the $|Y_i| = |Y'_i|+1$ agents $a_i^j$, which is a contradiction.
Since $i$ was arbitrary, we can conclude that $\alpha$
satisfies $\varphi$.

For the other direction, let $\alpha : X \rightarrow \{ 0,1 \}$ be a truth assignment
of weight $k$ that satisfies $\varphi$,
and let $S$ denote the set of $k$ variable items $\{ w_u \mid x_u \in X, \alpha(x_u) = 1 \}$.
To prove that $S$ is a solution for $I_{\textup{PID}}$, we define a proportional allocation $\pi$ in 
the instance obtained by removing $S$.
First, for each selection agent $s_i$ with $1 \leq i \leq n-k$, we let $\pi$ 
allocate $f(s_i)$ and the $i$th item from $W_{X} \setminus S$ to $s_i$.
Second, for each selection agent $s_{n-k+i}$ with $1 \leq i \leq k+1$, we let $\pi$ 
allocate $f(s_{n-k+i})$ and the dummy item $c_i$ to $s_{n-k+i}$.
Then, for each $1 \leq i \leq m_{i}$, let $1 \leq j_i \leq m_{2,i}$ be some
number such that $\alpha$ makes $\bigwedge\nolimits_{\ell = 1}^{m_{3,i,j_i}} l_{i,j_i,\ell}$
true---we know that such a $j_i$ exists for each $i$ because $\alpha$
satisfies $\varphi$.
For each verification agent $a_i^j$ we let $\pi$ allocate $f(a_i^j)$ and one item
from $Y_i$ to $a_i^j$ as follows.
If $j = j_i$, we let $\pi$ allocate $y_{i,m_{2,i}}$ to $a_i^j$;
if $j < j_i$, we let $\pi$ allocate $y_{i,j}$ to $a_i^j$; and
if $j > j_i$, we let $\pi$ allocate $y_{i,j-1}$ to $a_i^j$.
It is straightforward to check that $\pi$ is indeed proportional. 
\qed
\end{proof}

Theorem~\ref{thm-w3-hardness} implies that we cannot expect an FPT-algorithm for PID with respect to the parameter $k$, the number of item deletions allowed, unless $\textsf{FPT} \neq \textsf{W[3]}$. Next we show that the brute force algorithm that runs in $|I|^{O(k)}$ time is optimal, assuming the slightly stronger assumption $\textsf{FPT} \neq \textsf{W[1]}$.

\begin{theorem}
\label{thm-linear-w2-hardness}
There is no algorithm for PID that on an instance $(\mathcal{P},k)$ with item set $I$ runs in $f(k) |I|^{o(k)} |\mathcal{P}|^{O(1)}$ time for some function $f$, unless $\mathsf{FPT} \neq \mathsf{W}[1]$.\footnote{Here, we use an effective variant of ``little o'' (see, e.g. \cite[Definition~3.22]{FlumGrohe06}).} 
\end{theorem}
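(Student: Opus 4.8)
The plan is to re-use, essentially verbatim, the reduction constructed in the proof of Theorem~\ref{thm-w2hard}, and to combine it with the known lower bound that \textsc{$k$-Dominating Set} cannot be solved in time $f(k)\,n^{o(k)}$ (where $n=|V|$) for any computable $f$ unless $\mathsf{FPT}=\mathsf{W}[1]$. The key observation is that the reduction of Theorem~\ref{thm-w2hard} maps a \textsc{Dominating Set} instance $(G,k)$ to a PID instance $(\mathcal{P},k)$ \emph{with the same parameter $k$}, so it is not merely an FPT-reduction but a parameter-preserving one; this linear (in fact, identical) dependence of the parameter on the source parameter is precisely what is needed to transfer an $n^{o(k)}$-type lower bound. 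I would argue by contradiction: assuming a hypothetical algorithm $\mathcal{A}$ for PID running in $f(k)\,|I|^{o(k)}\,|\mathcal{P}|^{O(1)}$ time, I would compose it with the reduction to obtain an algorithm for \textsc{Dominating Set} that is faster than the cited lower bound permits.

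Before the composition I would record the size of the instance produced by the reduction in terms of $n=|V|$ and $m=|E|$. Since $|N|=3n+2m+1=O(n^2)$ and $|I|=2|N|+k=O(n^2)$, and since the whole profile $\mathcal{P}$ (the agents together with their preference lists) has size polynomial in $n$, both $|I|$ and $|\mathcal{P}|$ are bounded by a fixed polynomial in $n$; moreover the reduction itself runs in polynomial time. Feeding the reduced instance into $\mathcal{A}$ therefore decides whether $G$ has a dominating set of size at most $k$ in time $\mathrm{poly}(n)+f(k)\,\bigl(\mathrm{poly}(n)\bigr)^{o(k)}\,\bigl(\mathrm{poly}(n)\bigr)^{O(1)}$.

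The hard (and essentially the only genuinely technical) part will be to verify that this running time really is of the form $f'(k)\,n^{o(k)}$ in the \emph{effective} sense of little-$o$ demanded by the statement (cf.\ the footnote). Two points must be handled. First, a polynomial base gives $(\mathrm{poly}(n))^{o(k)}=(c\,n^{d})^{o(k)}=c^{o(k)}\cdot n^{d\cdot o(k)}$, where $c^{o(k)}$ depends on $k$ only and is absorbed into $f'$, and where $d\cdot o(k)$ is again an effective $o(k)$ function. Second, the extra factor $(\mathrm{poly}(n))^{O(1)}=n^{O(1)}$ contributes a constant $c'$ to the exponent, and $d\cdot o(k)+c'$ is still an effective $o(k)$ function, since dividing by $k$ and letting $k\to\infty$ kills both the $c'/k$ and the $o(k)/k$ terms. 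Collecting these manipulations, the composed algorithm solves \textsc{$k$-Dominating Set} in $f'(k)\,n^{o(k)}$ time for a suitable computable $f'$, contradicting the cited lower bound unless $\mathsf{FPT}=\mathsf{W}[1]$; this establishes the theorem.
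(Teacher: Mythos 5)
Your proposal is correct and is in substance the same proof as the paper's: both re-use the reduction from Theorem~\ref{thm-w2hard}, observe that it preserves the parameter $k$ (and blows up the instance only polynomially), and rest on the lower bound of Chen et al.~\cite{chen-w1-vs-eth-stronger} that \textsc{Dominating Set} admits no $f(k)\,|V|^{o(k)}(|V|+|E|)^{O(1)}$-time algorithm unless $\mathsf{FPT}=\mathsf{W}[1]$. The only difference is packaging: the paper invokes Chen et al.'s formal framework of linear FPT-reductions and $\mathsf{W}_l[2]$-hardness, which transfers the bound automatically, whereas you unfold that black box and do the composition together with the effective little-$o$ bookkeeping by hand.
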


\begin{proof}
Chen et al.~\cite{chen-w1-vs-eth-stronger} introduced the class of 
$\mathsf{W}_l[2]$-hard problems based on the notion of \emph{linear FPT-reductions}. They proved that \textsc{Dominating Set} is $\mathsf{W}_l[2]$-hard, and that this implies a strong lower bound on its complexity: unless $\mathsf{FPT} \neq \mathsf{W}[1]$, \textsc{Dominating Set} cannot be solved in $f(k) |V|^{o(k)} (|V|+|E|)^{O(1)}$ time for any function $f$.

Observe that in the FPT-reduction presented in the proof of Theorem~\ref{thm-w2hard} the new parameter has linear dependence on the original parameter (in fact they coincide). Therefore, this reduction is a linear FPT-reduction, and consequentially, PID is $\mathsf{W}_l[2]$-hard. Hence, as proved by Chen et al.~\cite{chen-w1-vs-eth-stronger}, PID on an instance~$(\mathcal{P},k)$ with item set~$I$ cannot be solved in time $f(k)|I|^{o(k)} |\mathcal{P}|^{O(1)}$ time for any function $f$, unless  $\textsf{FPT} \neq \textsf{W[1]}$.
\qed
\end{proof}

If we want to optimize the running time not with respect to the number~$k$ of allowed deletions but rather in terms of the total number of items, then we can also give the following tight complexity result, under the Exponential Time Hypothesis (ETH). 
This hypothesis, formulated in the seminal paper by Impagliazzo, Paturi, and Zane~\cite{IPZ-eth} says that \textsc{3-Sat} cannot be solved in $2^{o(n)}$ time, where $n$ is the number of variables in the 3-CNF fomular given as input.

\begin{theorem}
\label{tight-exp-algo}
PID can be solved in $O^\star(2^{|I|})$ time, but unless the ETH fails, it cannot be solved in $2^{o(|I|)}$ time, where $I$ is the set of items in the input. 
\end{theorem}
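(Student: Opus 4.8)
The plan is to prove the two halves of Theorem~\ref{tight-exp-algo} separately: an upper bound (the $O^\star(2^{|I|})$ algorithm) and a matching conditional lower bound (no $2^{o(|I|)}$ algorithm under ETH).

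\paragraph{Upper bound.}
For the algorithmic part, the idea is to iterate over all $2^{|I|}$ subsets $U \subseteq I$ and, for each one, test in polynomial time whether $I \setminus U$ admits a proportional allocation for $\mathcal{P} - U$. By the fact recalled at the start of Section~\ref{sec:unbounded-agents}, the existence of a proportional allocation can be decided in polynomial time via matching techniques (following~\cite{AGMW-2015}). A subset $U$ is a solution precisely when such an allocation exists, so we accept iff some $U$ with $|U| \le k$ passes the test. Enumerating all subsets costs $2^{|I|}$ and each check is polynomial in $|\mathcal{P}|$, giving the claimed $O^\star(2^{|I|})$ running time. (In fact one need only enumerate subsets of size at most $k$, but enumerating all subsets already suffices for the stated bound.) This half is routine.

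\paragraph{Lower bound.}
For the hardness part, the plan is to reuse the reduction from the proof of Theorem~\ref{thm-w2hard}, but now tracking the \emph{total number of items} rather than the parameter $k$. I would start from \textsc{Dominating Set}, whose ETH-hardness is standard: unless ETH fails, \textsc{Dominating Set} on a graph $G=(V,E)$ cannot be solved in $2^{o(|V|)}$ time (this follows from the classical ETH lower bound for \textsc{3-Sat} via the known linear-size reductions). The key observation is that the reduction of Theorem~\ref{thm-w2hard} produces an item set $I$ of size $2|N| + k = 2(3n + 2m + 1) + k$, where $n = |V|$ and $m = |E|$. This is \emph{linear} in $n$ only if $m$ is also linear in $n$, which is not true for dense graphs, so the reduction as stated would give $|I| = O(n + m)$ and hence only a $2^{o(n+m)}$ lower bound --- too weak.

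\paragraph{The main obstacle.}
The hard part will be ensuring that the produced PID instance has $|I| = O(|V|)$, not $O(|V| + |E|)$, so that a $2^{o(|I|)}$ algorithm would yield a $2^{o(|V|)}$ algorithm for \textsc{Dominating Set}. To achieve this I would invoke the sparsification lemma of~\cite{IPZ-eth}: under ETH, \textsc{3-Sat} (and hence \textsc{Dominating Set} via a linear-size reduction) has no $2^{o(n)}$ algorithm even on instances where the number of clauses/edges is \emph{linear} in the number of variables/vertices. Restricting to such sparse instances forces $m = O(n)$, so the reduction yields $|I| = O(n)$. A $2^{o(|I|)}$ algorithm for PID would then solve \textsc{Dominating Set} on these sparse instances in $2^{o(n)}$ time, contradicting ETH. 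Assembling these pieces --- the sparsification step to control $m$, and the verification that the reduction is polynomial-time and preserves the linear size bound --- completes the proof.
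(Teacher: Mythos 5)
Your proposal is correct and follows essentially the same route as the paper: the upper bound is the trivial subset-enumeration with a polynomial-time matching check, and the lower bound reuses the reduction from Theorem~\ref{thm-w2hard} after invoking the Sparsification Lemma of~\cite{IPZ-eth} to ensure the \textsc{Dominating Set} instances are sparse (the paper phrases this as maximum degree~3 via the standard \textsc{3-Sat} reduction, you phrase it as $m=O(n)$, which is the same point), so that the item count becomes $O(|V|)$ rather than $O(|V|+|E|)$. You also correctly identified the dense-graph obstacle that makes the sparsification step necessary, which is exactly the crux of the paper's argument.
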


\begin{proof}
The so-called Sparsification Lemma proved by Impagliazzo et al.~\cite{IPZ-eth} implies that assuming the ETH, \textsc{3-Sat} cannot be solved in $2^{o(m)}$ time, where $m$ is the number of clauses in the 3-CNF formula given as input. Since the standard reduction from \textsc{3-Sat} to \textsc{Dominating Set} transforms a 3-CNF formula with $n$ variables and $m$ clauses into an instance $(G,n)$ of \textsc{Dominating Set} such that the graph $G$ has $O(m)$ vertices and maximum degree 3 (see, e.g., \cite{thulasiraman-handbook}), it follows that \textsc{Dominating Set} on a graph $(V,E)$ cannot be solved in $2^{o(|V|)}$ time even on graphs having maximum degree 3, unless the ETH fails. 

Recall that the reduction presented in the proof of Theorem~\ref{thm-w2hard} computes from each instance $(G,k)$  of \textsc{Dominating Set} with $G=(V,E)$ an instance $(\mathcal{P},k)$ of PID where the number of items is $3|V|+2|E|+1$. Hence, assumming that our input graph $G$ has maximum degree~3, we obtain $|I|=O(|V|)$ for the set~$I$ of items in $\mathcal{P}$. Therefore, an algorithm for PID running in $2^{o(|I|)}$ time would yield an algorithm for \textsc{Dominating Set} running in $2^{o(|V|)}$ time on graphs of maximum degree 3, contradicting the ETH. 
\qed
\end{proof}

\section{Three Agents}
\label{sec:threeagents}

It is known that PID for two agents is solvable in polynomial-time: 
the problem of obtaining an envy-free allocation by item deletion is polynomial-time solvable if there are only two agents \cite{ASW-ijcai,brams-AL}; 
since for two agents an allocation is proportional if and only 
it is envy-free \cite{AGMW-2015}, this proves tractability of PID for $|N|=2$ immediately. In this section, we generalize this result by proving that PID is polynomial-time solvable for three agents.

Let us define the underlying graph $G$ of our profile $\mathcal{P}$ of PID as the following bipartite graph. The vertex set of $G$ consists of the set $I$ of items on the one side, and a set $S$ on the other side, containing all pairs of the form~$(x,i)$ where~$x \in N$ is an agent and $i \in \{1, \dots, \lceil |I|/|N| \rceil \}$. 
Such pairs are called \emph{slots}. 
We can think of the slot $(x,i)$ as the place for the $i$th item that agent $x$ receives in some allocation. We say that an item is \emph{eligible} for a slot $(x,i)$, if it is contained in $L^x[1:|N|(i-1)+1]$.
In the graph $G$, we connect each slot with the items that are eligible for it. Observe that any proportional allocation corresponds to a perfect matching in $G$; 
see Lemma~\ref{lem-propalloc} for a proof.

In what follows, we suppose that our profile $\mathcal{P}$ contains three agents, 
so let~$N=\{a,b,c\}$.

\subsection{Basic Concepts: Prefixes and Minimum Obstructions}
Since our approach to solve PID with three agents is to apply dynamic programming, we need to handle partial instances of PID. Let us define now the basic necessary concepts.

\smallskip
\noindent
{\bf Prefixes.}
For any triple $(i_a,i_b,i_c)$ with $1 \leq i_a,i_b,i_c \leq |I|$ we define a \emph{prefix}
$\mathcal{Q}=\mathcal{P}[i_a,i_b,i_c]$ of $\mathcal{P}$ as the triple $(L^a[1:i_a],L^b[1:i_b],L^c[1:i_c]$), listing only the first $i_a$, $i_b$, $i_c$ items in the preference list of agents $a$, $b$, and $c$, respectively.
We call $(i_a,i_b,i_c)$ the \emph{size} of $\Q$ and denote it by $\textup{size}(\Q)$. 
We also define the \emph{suffix}~$\P-\Q$ as the triple $(L^a[i_a+1:|I|],L^b[i_b+1:|I|],L^c[i_c+1:|I|])$, which can be thought of as the remainder of $\P$ after deleting $\Q$ from it.

We say that a prefix $\P_i=\mathcal{P}[i_a,i_b,i_c]$ is \emph{contained in} another prefix $\P_j=\mathcal{P}[j_a,j_b,j_c]$  if $j_x \leq i_x$ for each $x \in N$; the containment is \emph{strict} if $j_x<i_x$ for some $x \in N$. We say that $\P_i$ and $\P_j$ are \emph{intersecting} 
if none of them contains the other; we call the unique largest prefix 
contained both in $\P_i$ and in $\P_j$, 
i.e., the prefix $\P[\min(i_a,j_a),\min(i_b,j_b),\min(i_c,j_c)]$,
their \emph{intersection}, and denote it by $\P_i \cap \P_j$.

For some prefix $\mathcal{Q}=\mathcal{P}[i_a,i_b,i_c]$, 
let $I(\mathcal{Q})$ denote the set of all items appearing in $\mathcal{Q}$, and 
let $S(\mathcal{Q})$ denote the set of all slots appearing in $\mathcal{Q}$, 
i.e., $S(\Q)=\{ (x,i) \mid 1 \leq i \leq \lceil (i_x+2)/3 \rceil, x \in N\} $. 
We also define the graph $G(\mathcal{Q})$ underlying~$\mathcal{Q}$ 
as the subgraph of $G$ induced by all slots and items appearing in $\mathcal{Q}$, 
that is, $G(\mathcal{Q})=G[S(\mathcal{Q}) \cup I(\mathcal{Q})]$. 
We say that a slot is \emph{complete} in $\Q$, if it is connected to the same items in $G(\Q)$ as in $G$; clearly the only slots which may be incomplete are 
the last slots in $\Q$, that is, the slots $(x,\lceil (i_x+2)/3 \rceil )$, $x \in N$.

\smallskip
\noindent
{\bf Solvability.}
We say that a prefix $\mathcal{Q}$ is \emph{solvable}, if the underlying graph $G(\Q)$ has a matching that covers all its complete slots. 
Hence, a prefix is solvable exactly 
if there exists an allocation $\pi$ from $I(\mathcal{Q})$ to $N$ that satisfies the condition of proportionality restricted to all complete slots in $\mathcal{Q}$:
for any agent $x \in N$ and any index $i \in \{1, \dots, i'_x \}$,
the number of items from $L^x[1:i_x]$ allocated by~$\pi$ to~$x$ is 
at least $i_x/3$; here $i'_x=3(\lfloor (i_x+2)/3 \rfloor)-2$ is the last position in $\Q$ that is contained in a complete slot for agent $x$. 

\smallskip
\noindent
{\bf Minimal obstructions.}
We say that a prefix $\Q$ is a \emph{minimal obstruction}, 
if it is not solvable, but all prefixes strictly contained in $\Q$ are solvable. 
See Figure~\ref{fig-example} for an illustration.
The next lemmas claim some useful observations about minimal obstructions. 

\begin{figure}[t]
\begin{tabular}{lll}
\begin{tabular}[t]{ll}
\multicolumn{2}{l}{Profile $\P$:} \\[4pt]
$a$: & $1, 3, 2, 4, 6, 5, 7$. \\
$b$: & $3, 1, 5, 2, 7, 4, 6$. \\
$c$: & $2, 4, 5, 3, 6, 7, 1$. 
\end{tabular}
&
\hspace{0.4cm}
\begin{tabular}[t]{ll}
\multicolumn{2}{l}{Min. obstruction $\Q$:} \\[4pt]
$a$: & $1, 3, 2, 4$. \\
$b$: & $3, 1, 5, 2$. \\
$c$: & $2, 4, 5, 3$. 
\end{tabular}
&
\hspace{0.4cm}
\begin{tabular}[t]{l}
Graph $G(\Q)$:\\[4pt]
\includegraphics[scale=1]{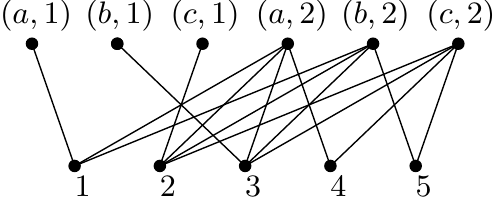}
\end{tabular}
\\
\begin{tabular}[t]{ll}
\multicolumn{2}{l}{$\Q-\{2\}$:} \\[4pt]
$a$: & $\underline{1}, 3, 4$. \\
$b$: & $\underline{3}, 1, 5$. \\
$c$: & $\underline{4}, 5, 3$. 
\end{tabular}
&
\hspace{0.4cm}
\begin{tabular}[t]{ll}
\multicolumn{2}{l}{$\P-\{2\}$:} \\[4pt]
$a$: & $\underline{1}, 3, 4, \underline{6}, 5, 7$. \\
$b$: & $\underline{3}, 1, 5, \underline{7}, 4, 6$. \\
$c$: & $\underline{4}, \underline{5}, 3, 6, 7, 1$. 
\end{tabular}
&
\end{tabular}
\caption{
An example profile $\P$ with item set $I=\{1,2,\dots,7\}$, 
a minimal obstruction $\Q$ of size $(4,4,4)$ in $\P$ 
and its associated graph $G(\Q)$. 
Note that the partial solution $\{2\}$ for $\Q$ is a soluton for $\P$ as well.
We depicted a proportional allocation for $\Q-\{2\}$ and $\P-\{2\}$ by underlining in each agent's preference list the items allocated to her. 
}
\label{fig-example}
\end{figure}

\begin{lemma}
\label{lem-propalloc}
Profile $\P$ admits a proportional allocation if and only if 
the underlying graph $G$ contains a perfect matching. 
Also, in $O(|I|^3)$ time we can find either a proportional allocation for $\P$, or a minimal obstruction $\Q$ in $\P$. 
\end{lemma}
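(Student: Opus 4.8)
The plan is to treat the two claims separately, starting with the equivalence via an explicit bijection between proportional allocations and perfect matchings of $G$. Note first that a perfect matching can exist only when $|I|$ is a multiple of $|N|$, since then the number of slots is $|N|\cdot\lceil|I|/|N|\rceil=|I|$, matching the divisibility requirement for proportionality. Given a proportional allocation $\pi$, for each agent $x$ I would list the items assigned to $x$ in the order of $L^x$ and send the $r$-th such item to the slot $(x,r)$. The key point is that if this item sits at position $p$ of $L^x$, then $L^x[1:p-1]$ contains exactly $r-1$ items given to $x$, so proportionality at position $p-1$ gives $r-1\ge (p-1)/|N|$, i.e. $p\le |N|(r-1)+1$; hence the item is eligible for $(x,r)$, and we obtain a matching saturating every item, which is perfect. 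Conversely, from a perfect matching I assign each item to the agent owning its matched slot; to verify proportionality at position $p$ for agent $x$, I set $r=\lfloor (p-1)/|N|\rfloor+1$ and observe that the slots $(x,1),\dots,(x,r)$ are matched to distinct items, all lying in $L^x[1:|N|(r-1)+1]\subseteq L^x[1:p]$, while $r\ge p/|N|$, which is exactly the required bound.

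For the algorithmic claim I would first compute a maximum matching of $G$ by augmenting paths: with $O(|I|)$ augmentations, each costing $O(|I|^2)$, this takes $O(|I|^3)$, and if the matching is perfect the first part converts it into a proportional allocation. Otherwise $\P=\P[|I|,|I|,|I|]$ is unsolvable, and I would extract a minimal obstruction by a descent that maintains a current unsolvable prefix $\Q$ and repeatedly tries to decrease one of its three coordinates by one while preserving unsolvability, halting at the first $\Q$ for which no single decrease keeps it unsolvable. To stay within $O(|I|^3)$ I would run this incrementally: the total $i_a+i_b+i_c$ starts below $3|I|$ and strictly drops at every accepted reduction, so there are $O(|I|)$ accepted steps and $O(|I|)$ tested candidates overall, and each test only deletes the affected item and edges from the current matching and performs a single augmenting-path search in $O(|I|^2)$ time to decide whether all complete slots of the candidate can still be covered.

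Correctness of the descent rests on \emph{monotonicity of solvability}: if $\Q$ is solvable, then every prefix contained in $\Q$ is solvable. Granting this, each strict sub-prefix of the final $\Q$ is contained in one of its three immediate reductions, all of which are solvable, so $\Q$ is indeed a minimal obstruction; and solvability of the terminal small prefixes (e.g. $\P[1,1,1]$, unsolvable precisely when two agents share their top item) poses no difficulty. I expect monotonicity to be the main obstacle, because decreasing a coordinate interacts delicately with the notion of complete slots: lowering $i_x$ by one either leaves the set of complete slots unchanged or turns the last slot of agent $x$ from complete into incomplete, while possibly deleting from $I(\Q)$ the single item $L^x[i_x]$. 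I would settle it through Hall's condition on the complete slots: a short case analysis on $i_x \bmod 3$ shows that the only slot that can lose complete status is the last one of agent $x$, and that the possibly deleted item is eligible only for slots of agent $x$ that are no longer required to be covered; consequently the neighbourhood of any set of complete slots of the smaller prefix is unchanged, so Hall's condition is inherited and solvability is preserved downward.
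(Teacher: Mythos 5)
Your proof of the equivalence is essentially the paper's own argument (both directions hinge on the eligibility bound $p\le |N|(r-1)+1$, derived from proportionality at position $p-1$), but your method for extracting a minimal obstruction takes a genuinely different and valid route. The paper gets the obstruction directly from a \emph{stalled} augmenting-path computation: it grows a matching by processing slots level by level, and when some starting slot $(x,i)$ admits no augmenting path, it argues that the set of slots reachable by alternating paths from $(x,i)$ is downward closed (this uses the processing order), hence equals $S(\Q)$ for some prefix $\Q$, which standard matching arguments show to be a minimal obstruction. You instead compute any maximum matching first and then localize an obstruction by coordinate descent, which requires exactly the monotonicity lemma you flag as the crux: solvability is inherited by contained prefixes. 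That lemma is true, and for the clean reason you reach at the end: a slot that is complete in the smaller prefix has its full $G$-neighbourhood inside the smaller item set, so complete slots can only disappear (never appear) when a coordinate decreases, their neighbourhoods are unchanged, and a deleted item is adjacent to no surviving complete slot --- so Hall's condition is inherited. (Two of your intermediate claims are false as stated but harmless: when $i_x$ drops, the deleted item $L^x[i_x]$ may be eligible for the last slot of \emph{another} agent, so slots of agents other than $x$ can lose complete status; what rescues the argument is precisely that such slots then cease to require coverage.) Monotonicity also silently underpins your count of $O(|I|)$ tests: a rejected decrement of one coordinate stays rejected after further decrements of the others, so each coordinate is rejected at most once. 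Your claim of ``a single augmenting-path search'' per test is optimistic --- the current prefix is unsolvable, so its matching leaves uncovered complete slots and a test may need several searches --- but the total over the whole descent is still $O(|I|)$ searches, since successful searches are paid for by matching-size increases (the size drops by at most one per accepted step) and certifying unsolvability needs only one failed search per accepted step. In sum, your route costs an extra lemma plus an amortized analysis, but is more modular than the paper's (no careful slot-processing order needed), and both give $O(|I|^3)$.
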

\begin{proof}
We prove this lemma for arbitrary $|N|$. 

First, it is easy to see that any proportional allocation $\pi$ immediately 
yields a perfect matching $M$ for $G$: for each $x \in N$ and each $i \in \{1, \dots, |I|/|N|\}$ (note that $|I|/|N| \in \mathbb{N}$ since $\pi$ is proportional), 
we simply put into $M$ the edge connecting slot $(x,i)$ with the $i$th item $p_{(x,i)}$ received by $x$; naturally, we rank items received by $x$ according to $x$'s preferences. The proportionality of $\pi$ implies that $p_{(x,i)}$ is contained in the top $(i-1)N+1$ items in $L^x$, and thus is indeed eligible for the slot $(x,i)$. 

For the other direction, consider a perfect matching $M$ in $G$. 
Then giving every agent $x$ all the items assigned to 
the slots $\{(x,i) \mid i \in \{1, \dots, |I|/|N|\}$ by $M$
we obtain a proportional allocation $\pi$: 
for each agent $x$ and index $j \in \{1, \dots, |N|\}$, 
our allocation $\pi$ assigns at least $j/|N|$ items to $x$ from $L^x[1:j]$, 
namely the items matched by $M$ to the slots $\{(x,i) \mid 1 \leq i \leq \lceil j/|N| \rceil \}$. Since $(\lceil j/|N| \rceil-1)|N|+1 \leq j$, even 
the last item eligible for $(x, \lceil j/|N| \rceil)$ is 
contained in $L^x[1:j]$, ensuring that $\pi$ is indeed proportional.

We can check whether there exists a proportional allocation for $\P$ 
by finding a maximum matching in a bipartite graph. 
Using the Hopcroft--Karp algorithm~\cite{Hopcroft-Karp-1973},
this takes $O(|I|^{5/2})$ time since $G$ has $2|I|$ vertices. 
To find a minimal obstruction, 
we can use a variant of the classical augmenting path method that
starts from an empty matching, and increases its size by 
finding augmenting paths one by one. 
Namely, at each iteration we pick a starting slot $(x,i)$ 
for which all slots in $\{(x',j)\mid x' \in N, 1 \leq j < i \}$ are already matched,
and search for an augmenting path that starts at $(x,i)$.

Suppose that this algorithm stops at an iteration where 
the starting slot is $(x,i)$, and no augmenting path starts at $(x,i)$ 
for the current matching $M$. 
Let $S_H$ be the set of all slots reachable by an alternating path in $G$ 
from $(x,i)$, and let $I_H$ be the set of all items eligible for any slot in $S_H$. 
It is well known that $S_H$ and $I_H$ violate Hall's condition: $|I_H|<|S_H|$. 
Moreover, the slots in $S_H$ ``induce'' a prefix in the sense that there exists a prefix $\Q$ with $S(\Q)=S_H$. To prove this, 
it suffices to show that if $(y,j) \in S_H$ and $j' \in \{1, \dots, j-1\}$,
then $(y,j') \in S_H$. By our strategy for picking starting slots, we know 
$j' <j \leq i$, implying that $(y,j')$ is matched by $M$. 
Let $q$ be the item assigned to it by $M$; 
note that $q$ is eligible for $(y,j)$ as well. 
To obtain an augmenting path from $(x,i)$ to $(y,j')$, 
we can take any augmenting path from $(x,i)$ to $(y,j)$, 
and append the two-edge path from $(y,j)$ to $(y,j')$ through $q$. 
Hence, there indeed exists a prefix $\Q$ with $S(\Q)=S_H$; 
we pick such a $\Q$ containing only complete slots. 
Using standard arguments from matching theory, 
it is straightforward to check that $\Q$ is a minimal obstruction.

Each iteration can be performed in $O(|I|^2)$ time (e.g., with a BFS), 
and there are at most $|I|$ steps, so the algorithm runs in $O(|I|^3)$ time. 
\qed
\end{proof}

\begin{lemma}
\label{lemma-minobst-shape}
Let $\mathcal{Q}=\mathcal{P}[i_a,i_b,i_c]$ be a prefix of $\mathcal{P}$ 
that is a minimal obstruction. Then $i_a \equiv i_b \equiv i_c \equiv 1 \mod 3$, and either 
\begin{itemize}
\item[(i)] $i_a=i_b=i_c$, or 
\item[(ii)] $i_x=i_y=i_z+3$ for some choice of agents $x$, $y$, and $z$ with $\{x,y,z\}=\{a,b,c\}$. 
\end{itemize}
Moreover, if (ii) holds, then  $L^x[1:i_x]$ and $L^y[1:i_y]$ contain exactly the same item set, namely $I(\mathcal{Q})$.
\end{lemma}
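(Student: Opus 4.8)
The plan is to prove both claims—the congruence $i_a\equiv i_b\equiv i_c\equiv 1 \pmod 3$ and the balance condition—by repeatedly exploiting minimality: every prefix obtained from $\Q$ by strictly decreasing one of the indices $i_a,i_b,i_c$ is solvable, so it suffices to exhibit such a strictly smaller prefix that is \emph{not} solvable whenever the claimed shape fails. Throughout I would work in the underlying graph $G(\Q)$ and read ``solvable'' as ``all complete slots can be matched,'' applying Hall's condition in its deficiency form on the slot side. A preliminary computation I would record first is that the last slot $(x,\lceil(i_x+2)/3\rceil)$ of agent $x$ is complete if and only if $i_x\equiv 1\pmod 3$; equivalently, when $i_x\equiv1\pmod3$ \emph{every} slot of $x$ appearing in $\Q$ is complete.

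For the congruence, suppose $i_x\not\equiv1\pmod3$ for some agent $x$, and let $i_x'$ be the largest value below $i_x$ with $i_x'\equiv1\pmod3$ (so $i_x'\in\{i_x-1,i_x-2\}$ and $i_x'\ge1$). Reducing $i_x$ to $i_x'$ deletes exactly the incomplete last slot of $x$ while leaving the collection of \emph{complete} slots unchanged, and the resulting prefix $\Q'$ has $G(\Q')$ equal to the subgraph of $G$ induced by $S(\Q')\cup I(\Q')$, hence an induced subgraph of $G(\Q)$. Therefore any matching of $G(\Q')$ covering its complete slots is also such a matching in $G(\Q)$, so if $\Q'$ were solvable then $\Q$ would be solvable as well. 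As $\Q$ is not solvable, neither is $\Q'$, contradicting minimality since $\Q'$ is strictly smaller. This forces $i_a\equiv i_b\equiv i_c\equiv1\pmod3$.

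For the balance condition I would write $i_x=3\alpha_x+1$; then all slots are complete, $|S(\Q)|=\alpha_a+\alpha_b+\alpha_c+3$, and $N(S(\Q))=I(\Q)$. First I would pin down the deficiency: reducing the largest agent by $3$ yields a strictly smaller, hence solvable, prefix, producing a matching of size $|S(\Q)|-1$ in $G(\Q)$, so (as $\Q$ is unsolvable) the maximum deficiency $\max_T(|T|-|N(T)|)$ equals exactly $1$. The crucial step is then to show that a deficiency-$1$ set can be taken to be all of $S(\Q)$. I would start from such a set $T$, replace for each agent $x$ the slots of $T$ belonging to $x$ by the initial segment of $x$-slots of the same size (which only shrinks $N(T)$, by the nestedness of the slot neighbourhoods), and take $T$ with $|T|$ maximal among such sets. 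If $T$ omitted the top slot of some agent $x$ with $i_x\ge4$, reducing $i_x$ by $3$ would leave $N(T)$ unchanged—each removed item lies beyond the initial segment of $x$ and, if relevant to $T$, survives through another agent's list—so $T$ would stay deficient in a strictly smaller prefix, a contradiction; if $i_x=1$, maximality of $T$ together with $\max$-deficiency $=1$ forces $(x,1)\in T$. Hence $T=S(\Q)$ and $|I(\Q)|=|N(S(\Q))|=|S(\Q)|-1=\alpha_a+\alpha_b+\alpha_c+2$.

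Finally I would combine this count with the trivial lower bound $|I(\Q)|\ge\max_x i_x=3\max_x\alpha_x+1$, coming from the fact that the longest list alone contributes that many distinct items. Assuming $\alpha_a\ge\alpha_b\ge\alpha_c$, this yields $2\alpha_a-1\le\alpha_b+\alpha_c\le2\alpha_a$, whose only integer solutions with $\alpha_b,\alpha_c\le\alpha_a$ are $(\alpha_a,\alpha_a,\alpha_a)$ and $(\alpha_a,\alpha_a,\alpha_a-1)$—exactly cases (i) and (ii); the asymmetry that the odd index must be the \emph{smaller} one is precisely what this inequality enforces. For the ``moreover'' part, in case (ii) the count gives $|I(\Q)|=3\alpha_a+1=i_x=i_y$ for the two larger agents, and since $L^x[1:i_x],L^y[1:i_y]\subseteq I(\Q)$ each already contain $3\alpha_a+1$ distinct items, both must equal $I(\Q)$. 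I expect the main obstacle to be the crucial step that the deficiency-achieving set is all of $S(\Q)$: this is where the shared-item structure across the three lists must be handled with care, since the neighbourhood of a deficient set can only be shown invariant under the reduction because every deleted item is either irrelevant to that set or retained via another agent's preferences.
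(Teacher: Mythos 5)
Your proposal is correct and follows essentially the same route as the paper's proof: congruences forced by minimality because dropping positions with $i_x\not\equiv 1 \pmod 3$ leaves the complete slots unchanged, then the count $|I(\mathcal{Q})|=|S(\mathcal{Q})|-1$ from Hall's condition failing exactly at the full slot set, and finally the shape (and the ``moreover'' claim) by comparing this count against the $\max_x i_x$ distinct items on the longest list. The only real difference is one of detail: where the paper asserts in a single line that ``by minimality, Hall's condition holds for any proper subset of the slots,'' you actually prove this via prefix-shaping deficient sets (using nestedness of slot neighbourhoods) and a maximality argument, which is a faithful elaboration of what the paper leaves implicit.
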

\begin{proof}
First, observe that if $i_a \not\equiv 1 \mod 3$,
then the set of complete slots  is the same in $\Q$ as in $\P[i_a-1,i_b,i_c]$, 
contradicting the minimality of $\Q$. Thus, we have $i_a \equiv 1 \mod 3$, and 
we get $i_b \equiv i_c \equiv 1 \mod 3$ similarly. 

Second, let us consider the graph $G(\mathcal{Q})$ underlying our prefix. 
Since Hall's condition fails for the set $S(\mathcal{Q})$ of (complete) slots but,
by minimality, it holds for any proper subset of these slots, we know that 
\begin{equation}
\label{eqn-1}
I(\mathcal{Q})|=|S(\mathcal{Q})|-1= \left\lceil \frac{i_a+2}{3} \right\rceil + \left\lceil \frac{i_b+2}{3} \right\rceil + \left\lceil \frac{i_c+2}{3} \right\rceil -1 = \frac{i_a+i_b+i_c}{3}+1
\end{equation} 
where the last equality follows from the first claim of the lemma.
Let us assume $i_a \geq \max \{ i_b,i_c\}$. 
Note that if neither (i) nor (ii) holds, then from (\ref{eqn-1}) we obtain $i_a+i_b+i_c \leq 3i_a -6$,
yielding $|I(\mathcal{Q})| \leq i_a-1$. However, $L^a[1:i_a]$ contains only items from $I(\mathcal{Q})$, which would imply that some item appears twice in $L^a[1:i_a]$, a contradiction. 

To see the last claim of the lemma, suppose $i_a=i_b=i_c+3$. 
Then (\ref{eqn-1}) implies $|I(\mathcal{Q})|=i_a=i_b$, and hence $L^a[1:i_a]$ (and also $L^b[1:i_b]$) must contain each item in $I(\mathcal{Q})$ exactly once. 
\qed
\end{proof}

Based on Lemma~\ref{lemma-minobst-shape}, we define the \emph{shape} of a minimal obstruction  $\mathcal{Q}$ as either \emph{straight} or \emph{slant}, depending on whether $\mathcal{Q}$ fulfills the conditions (i) or (ii), respectively. More generally, we also say that a prefix has straight or slant shape if it fulfills the respective condition. 
Furthermore, we define the \emph{boundary items} of $\mathcal{Q}$, denoted by $\delta(\mathcal{Q})$, as the set of all items that appear once or twice (but not three times) in~$\mathcal{Q}$.

\begin{lemma}
\label{lemma-boundary}
Let $\mathcal{Q}$ be a prefix of $\mathcal{P}$ that is a minimal obstruction.
Then the boundary of $\mathcal{Q}$ contains at most three items: 
$|\delta(\mathcal{Q})| \leq 3$.
\end{lemma}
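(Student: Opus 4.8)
The plan is to prove the bound by a simple double-counting argument on how many times each item occurs across the three preference lists making up $\mathcal{Q}$. Since there are exactly three agents and each preference list is a strict ordering, every item of $I(\mathcal{Q})$ occurs in at least one and at most three of the lists $L^a[1:i_a]$, $L^b[1:i_b]$, $L^c[1:i_c]$. I would therefore partition $I(\mathcal{Q})$ according to multiplicity: let $n_1$, $n_2$, and $n_3$ denote the number of items occurring in exactly one, exactly two, and exactly three of these lists, respectively. By the definition of the boundary, $\delta(\mathcal{Q})$ consists precisely of the items of multiplicity one or two, so $|\delta(\mathcal{Q})| = n_1 + n_2$.

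First I would record the two basic counting identities. Counting distinct items gives $n_1 + n_2 + n_3 = |I(\mathcal{Q})|$, while counting item-occurrences (each list contributes its length) gives $n_1 + 2n_2 + 3n_3 = i_a + i_b + i_c$.

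The crucial ingredient is then equation~(\ref{eqn-1}) from the proof of Lemma~\ref{lemma-minobst-shape}, which already establishes $|I(\mathcal{Q})| = (i_a+i_b+i_c)/3 + 1$ for any minimal obstruction $\mathcal{Q}$, regardless of whether its shape is straight or slant. Substituting this into the first identity and eliminating $i_a + i_b + i_c$ using the second identity, I would obtain the single clean relation $2n_1 + n_2 = 3$. From this, $|\delta(\mathcal{Q})| = n_1 + n_2 = 3 - n_1 \le 3$ follows immediately because $n_1 \ge 0$.

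The argument is essentially free of obstacles once equation~(\ref{eqn-1}) is in hand: the only things to be careful about are justifying that multiplicities lie in $\{1,2,3\}$ for items of $I(\mathcal{Q})$ (so that the partition into $n_1,n_2,n_3$ is exhaustive) and observing that no separate case distinction on straight versus slant shape is needed, since (\ref{eqn-1}) covers both shapes uniformly. It is worth noting that the relation $2n_1 + n_2 = 3$ even pins down the structure of the boundary: equality $|\delta(\mathcal{Q})| = 3$ holds exactly when $n_1 = 0$, that is, when all three boundary items occur twice, which I expect to be a useful structural fact for the subsequent dynamic-programming argument.
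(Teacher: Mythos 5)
Your proof is correct, but it follows a genuinely different route from the paper's. The paper proves the bound by a case distinction on the shape of $\mathcal{Q}$ given by Lemma~\ref{lemma-minobst-shape}: in the straight case it argues that each of the three lists misses exactly one item of $I(\mathcal{Q})$, and in the slant case it uses the lemma's final claim (that the two long lists contain all of $I(\mathcal{Q})$) to conclude that exactly three occurrences fall outside $\mathcal{Q}$, all in the short list. Your double-counting argument replaces this case analysis by the two multiplicity identities $n_1+n_2+n_3=|I(\mathcal{Q})|$ and $n_1+2n_2+3n_3=i_a+i_b+i_c$, combined with equation~(\ref{eqn-1}); it therefore needs only the congruence claim of Lemma~\ref{lemma-minobst-shape} and the deficiency-one property of a minimal obstruction, not the straight/slant dichotomy or the slant-case item-set claim at all. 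What your approach buys is uniformity and a slightly stronger conclusion: the relation $2n_1+n_2=3$ shows that the only possible boundary profiles are $(n_1,n_2)=(0,3)$ or $(1,1)$, i.e.\ $|\delta(\mathcal{Q})|=3$ forces all three boundary items to have multiplicity two. What the paper's shape-based argument buys instead is the sharper per-case information (e.g.\ that a slant obstruction always has $|\delta(\mathcal{Q})|=3$ with all boundary occurrences missing from the same agent's list), which is the kind of structural detail exploited later in the domination and tail arguments. One small presentational point: equation~(\ref{eqn-1}) is established inside the proof of Lemma~\ref{lemma-minobst-shape} rather than stated in the lemma itself, so a self-contained write-up should briefly re-derive it ($|I(\mathcal{Q})|=|S(\mathcal{Q})|-1$ from minimality and the failure of Hall's condition, plus $|S(\mathcal{Q})|=(i_a+i_b+i_c)/3+2$ from the mod-$3$ congruences), exactly as you indicate.
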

\begin{proof}
We make use of Lemma~\ref{lemma-minobst-shape}. 
First, if $\mathcal{Q}$ has a straight shape, so $\mathcal{Q}=\mathcal{P}[i,i,i]$ for some index $i$, then $|S(\Q)|=i+2$. Since $\Q$ is a minimal obstruction, 
we get $|I(\Q)| = i+1$. However, each agent's list within $\Q$ contains exactly $i$ items, yielding that there is exactly one position outside $\Q$ in each agents's list where an item of $I(\Q)$ occurs. Hence, $|\delta(\Q)| \leq 3$ follows in this case.

Second, assume that $\Q$ has a slant shape, say $\mathcal{Q}=\mathcal{P}[i,i,i-3]$ for some index $i$ (the two remaining cases are analogous). Then we have $|S(\Q)|=i+1$, implying $|I(\Q)|=i$. Thus, both $L^a[1:i]$ and  $L^b[1:i]$ contain all the $i$ items in $I(\Q)$, but $L^c[1:i-3]$ misses exactly three items from $I(\Q)$. Hence, there are exactly three occurrences of items listed outside $\Q$, each in the list of agent $c$,  meaning $|\delta(\Q)|=3$. 
\qed
\end{proof}


\subsection{Partial Solutions and Branching Sets}

{\bf Partial solutions.}
For a prefix $\Q$ and a set $U$ of items, we define $\Q-U$ in the natural way: 
by deleting all items of $U$ from the (partial) preference lists of the profile 
(note that the total length of the preference lists constituting the profile may decrease). 
We say that an item set $Y \subseteq I(\Q)$ is 
a \emph{partial solution for $\Q$} if $\mathcal{Q}-Y$ is solvable.
See again Figure~\ref{fig-example} for an example.

Observe that for any item set $Y$ we can check 
whether it is a partial solution for $\Q$
by finding a maximum matching in the corresponding graph (containing all items and complete slots that appear in $\Q-Y$), 
which has at most $2|I|$ vertices. 
Hence, using the algorithm by Mucha and Sankowski~\cite{mucha-sankowski-max-matching}, we can check for any $Y \subseteq I(\Q)$ whether it is a partial solution for $\Q$
in $O(|I|^{\omega})$ time where $\omega<2.38$ is the exponent of the
best matrix multiplication algorithm.

\smallskip
\noindent
{\bf Branching set.}
To solve PID we will repeatedly apply a branching step: 
whenever we encounter a minimal obstruction $\Q$, we shall consider several possible 
partial solutions for $\Q$, and for each partial solution $Y$ we try to 
find a solution $U$ that contains $Y$. 
To formalize this idea, we say that a family~$\Y$ 
containing partial solutions for a minimal obstruction $\Q$ 
is a \emph{branching set for $\Q$}, 
if there exists a solution $U$ of minimum size for the profile $\P$
such that $U \cap I(\Q) \in \Y$. 
Such a set is exactly what we need to build a search tree algorithm for PID.

Lemma \ref{lemma-numberofdeletions} shows that we never need to delete more than two items 
from any minimal obstruction. 
This will be highly useful for constructing a branching set. 

\begin{lemma}
\label{lemma-numberofdeletions}
Let $\mathcal{Q}$ be a minimal obstruction in a profile $\mathcal{P}$, 
and let $U$ denote an inclusion-wise minimal solution for $\mathcal{P}$. 
Then $|U \cap I(\mathcal{Q})| \leq 2$.
\end{lemma}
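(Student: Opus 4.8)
The plan is to argue by contradiction, exploiting inclusion-wise minimality together with the fact (noted in the Preliminaries) that a proportional allocation can exist only when the number of surviving items is divisible by three. Suppose $U$ is an inclusion-wise minimal solution with $t := |U \cap I(\mathcal{Q})| \geq 3$, and let $\pi$ be a proportional allocation for $\mathcal{P} - U$; by Lemma~\ref{lem-propalloc} I view $\pi$ as a perfect matching $M$ in the balanced bipartite graph $H$ underlying $\mathcal{P} - U$. Since $|I \setminus U|$ is a multiple of three, the only way to obtain a proper subset of $U$ that is again a solution is to return a multiple of three items. I will therefore return exactly three items $D \subseteq U \cap I(\mathcal{Q})$ and show that $U' = U \setminus D$ is still a solution, contradicting the minimality of $U$. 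The convenient coincidence is that returning three items matches the bound $|\delta(\mathcal{Q})| \leq 3$ from Lemma~\ref{lemma-boundary}, so only a bounded, explicitly describable part of the instance is disturbed.

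First I would set up the graph $H'$ underlying $\mathcal{P} - U'$ and compare it with $H$. Returning the three items of $D$ increases the number of items by three and hence adds exactly one slot per agent, namely a new last slot whose eligibility window covers all but the last two items of the corresponding preference list in $\mathcal{P} - U'$. The items of $I(\mathcal{Q})$ sit near the top of the lists: by Lemma~\ref{lemma-minobst-shape} they occupy the top $i$ positions of two of the three agents in the slant case, and the top $i$ positions of each agent up to one missing item in the straight case. Consequently each returned item is eligible for the new last slot of at least two agents. A short Hall-type argument---at most two items can fall into the last two positions of agent $c$'s list---then shows that the three returned items can be assigned to the three new slots.

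The main obstacle is that reinserting $D$ shifts every item lying below it in an agent's list downwards by up to three positions, which can push a matched item $q$ past the threshold of its slot and thereby destroy an edge of $M$. The heart of the proof is to show that all such broken edges can be repaired. I would handle this by a cascading augmenting-path argument along each agent's chain of slots: an item displaced out of slot $(x,j)$ is rerouted to $(x,j+1)$, and the resulting chain is absorbed by the newly created last slot of that agent; equivalently, one argues directly that $H'$ still satisfies Hall's condition. This is exactly where the structural lemmas do the work. Lemma~\ref{lemma-boundary} localizes the disturbance to the at-most-three boundary positions, and the deficiency-one identity $|I(\mathcal{Q})| = |S(\mathcal{Q})| - 1$ established in the proof of Lemma~\ref{lemma-minobst-shape} guarantees that deleting a third item of $I(\mathcal{Q})$ creates genuine slack, so that returning it cannot recreate a deficient set of slots.

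Finally I would verify that the repaired matching $M'$ is perfect and hence, again via Lemma~\ref{lem-propalloc}, yields a proportional allocation for $\mathcal{P} - U'$, contradicting the inclusion-wise minimality of $U$. I expect the cleanest write-up to treat the straight and slant shapes of $\mathcal{Q}$ (Lemma~\ref{lemma-minobst-shape}) separately only at the level of the bookkeeping in the repair step, since the distribution of $I(\mathcal{Q})$ among $a$, $b$, and $c$ differs in the two cases, while the overall contradiction scheme is identical. The delicate point to get right is the interaction of the downward position shifts across agents: because items are shared between the three preference lists, the per-agent cascades cannot be argued in complete isolation, and it is here that I would invest the most care.
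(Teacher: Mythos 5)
There is a genuine gap, and it sits exactly where you park the difficulty. Your plan returns \emph{some} three items $D \subseteq U \cap I(\mathcal{Q})$ and then tries to repair the matching, but you never specify \emph{which} three items to return --- and for an arbitrary choice the claim ``$U \setminus D$ is still a solution'' is simply false. The danger is not the graph around $\mathcal{Q}$ that your cascading argument addresses: it is prefixes $\mathcal{R}$ of $\mathcal{P}-(U\setminus D)$ that contain only \emph{one or two} of the returned items. Re-inserting a single item into such a prefix adds one item but can lengthen all three agents' lists inside $\mathcal{R}$, hence can add up to three complete slots; if $\mathcal{R}$ minus the deletions was tight (items $=$ slots), this creates a fresh obstruction. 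Only when all three returned items appear together in $\mathcal{R}$ does the count ``$+3$ items versus at most $+3$ slots'' save you --- that is the easy half of the argument (your observation about new last slots handles essentially this case, as does the paper's first case). The paper's proof spends almost all of its effort on making the bad case impossible: it selects the three items $y_1,y_2,y_3$ by a two-stage tie-breaking rule (earliest-appearing items first; among ties, items eligible for more than one slot at their appearance index; and $y_3$ chosen to lie in a minimal obstruction of the auxiliary profile $\mathcal{P}-(U\setminus\{y_1,y_2\})$, if one exists inside $\mathcal{Q}$), and then shows that any minimal obstruction missing part of $Y$ would contradict one of these choices, using Lemma~\ref{lemma-minobst-shape} to pin down its slant shape. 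None of this structure is recoverable from your sketch.

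Your per-agent cascade also cannot be completed as stated. Pushing a displaced item from slot $(x,j)$ to $(x,j+1)$ is fine edge-wise (positions shift by at most three), but Hall's condition in these graphs is a condition on \emph{prefixes across all three agents simultaneously}, not on single-agent chains; the single-agent chains always satisfy Hall trivially. You acknowledge in your last sentence that the cross-agent interaction ``cannot be argued in complete isolation'' and that this is where care is needed --- but that interaction \emph{is} the lemma. The appeal to $|\delta(\mathcal{Q})|\le 3$ (Lemma~\ref{lemma-boundary}) and to the deficiency-one identity for $\mathcal{Q}$ localizes nothing, because the prefixes on which Hall can fail after re-insertion are not $\mathcal{Q}$ itself but prefixes strictly contained in, or intersecting, $\mathcal{Q}$, and controlling those requires knowing how the returned items were chosen.
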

\begin{proof}
Let $U_{\mathcal{Q}}:=U \cap I(\mathcal{Q})$, and let us assume $|U_{\mathcal{Q}}| \geq 3$ for contradiction. We are going to select a set $Y$ of three items from $U_{\mathcal{Q}}$ for which we can prove that $U \setminus Y$ is a solution for $\mathcal{P}$, contradicting the minimality of $U$.

We rank the items of $U_{\mathcal{Q}}$ according to the index of the first slot in which they appear in $\mathcal{P}$: we say that an item $u$ \emph{appears at} $i$, if $i$ is the smallest index such that 
$u$ is eligible for a slot $(x,i)$ some $x \in N$. 
If there exist three items $y_1$, $y_2$, and $y_3$ in $U_{\mathcal{Q}}$ 
appearing strictly earlier than (i.e., at a smaller index) 
than all other items in $U$, then we let $Y=\{y_1,y_2,y_3\}$. 

Otherwise, we apply the following tie braking procedure. 
We select $y_1$ arbitrarily among the earliest appearing items in $U_{\mathcal{Q}}$. Now, let $Y_2$ be the earliest appearing set of items in $U_{\mathcal{Q}} \setminus \{y_1\}$, appearing at some index $i$. We pick an item $y_2$ from $Y_2$ by favoring items eligible to more than one slots from $\{(a,i),(b,i),(c,i)\}$
(note that we use the notion of eligibility based on the original preference lists in $\mathcal{P}$).

To choose an item $y_3$ from the set $Y_3$ of the earliest appearing items in $U_{\mathcal{Q}} \setminus \{y_1,y_2\}$,
we create the profile $\mathcal{P}_3=\mathcal{P}-(U \setminus \{y_1,y_2\})$. 
If there exists a minimal obstruction in $\mathcal{P}_3$ strictly contained in $\mathcal{Q}$, then we fix such a minimal obstruction $\mathcal{Q}_3$, and we choose an item $y_3 \in Y_3$ eligible for a slot of $S(\Q_3)$. Otherwise we choose $y_3$ from $Y_3$ arbitrarily.
Intuitively, we choose $y_3$ so as to overcome the possible obstructions obtained when putting $y_1$ and $y_2$ back into our instance, and our strategy for this is simply to choose an item lying within any such obstruction. Observe that if the minimal obstruction $\mathcal{Q}_3$ exists, then (1) since there is no obstruction strictly contained in $\mathcal{Q}$ in the profile $\mathcal{P}-U_{\Q}$, there must exist some item $u \in U_{\mathcal{Q}} \setminus \{y_1,y_2\}$ that is eligible to some slot in $S(\Q_3)$; and (2) if some $u \in U_{\mathcal{Q}} \setminus \{y_1,y_2\}$ is eligible for a slot in $S(\Q_3)$, then by Lemma~\ref{lemma-minobst-shape} $Y_3$ also contains some item eligible for some slot of $S_3$. Hence, $y_3$ is well-defined. 

Setting $Y=\{y_1,y_2,y_3\}$, we finish our proof by proving that 
$U \setminus Y$ is a solution for $\P$. 
For contradiction, suppose that $\mathcal{R}$ is a minimal obstruction in $\mathcal{P}'=\mathcal{P}-(U \setminus Y)$. 

First, suppose that $\mathcal{R}$ contains all items in $Y$. As $U$ is a solution, the profile  $\mathcal{R}-Y$ is solvable, and hence contains at least as many items as slots. Note that adding the items of $Y$ into the profile $\mathcal{R}-Y$ means adding exactly three new items and at most three new slots (since each agent's list contains at most three more items, resulting in at most one extra slot per agent). Hence, $\mathcal{R}$ has at least as many items as slots, contradicting to the assumption that $\mathcal{R}$ is an obstruction. 

Second, suppose now that $\mathcal{R}$ does not contain all items in $Y$. Then $\mathcal{R}$ is clearly strictly contained\footnote{Seemingly it may be incorrect to say that $\R$ is \emph{contained} in $\Q$ because $\R$ is a prefix of $\P'$ while $\Q$ is a prefix of $\P$; however, recall that the definition of containment only depends on the notion of size.} 
in $\mathcal{Q}$, and by the minimality of $\mathcal{Q}$, we get that $\mathcal{R}$ must contain $y_1$. 

We claim that $y_2$ is contained in $\mathcal{R}$.
For contradiction, assume the opposite. Then $\mathcal{R}$ must contain $y_3$, otherwise $\mathcal{R}$ would be an obstruction in $\mathcal{P}_3$ as well, and we should have chosen an item from $Y_3$ connected to a slot of $S_3$ (as argued above, such an item exists), instead of $y_3$. Hence, we get that $y_3$ is contained in $\mathcal{R}$, but $y_2$ is not. As $y_2$ appears not later than $y_3$, 
this can only happen if $\mathcal{R}$ has slant shape, $y_2$ and $y_3$ appear at the same index $i$, and $y_3$ is eligible to two slots from $\{(a,i),(b,i),(c,i)\}$ 
but $y_2$ is eligible only to the remaining one slot (not contained in $\mathcal{R}$). 
However, this contradicts to our choice for $y_2$, proving our claim.

Hence, we know that both $y_1$ and $y_2$ are contained in $\mathcal{R}$, but not $y_3$. Assume w.l.o.g. that $y_3$ appears at index $j$ in the slot $(c,j)$.
Since $\mathcal{R}$ is an obstruction in $\mathcal{P}_3$ strictly contained in $\mathcal{Q}$, we know that a minimal obstruction $\mathcal{Q}_3$ was found when chosing $y_3$, but $\mathcal{R} \neq \mathcal{Q}_3$. 
Thus, both $\mathcal{R}$ and $\mathcal{Q}_3$ are minimal obstructions fulfilling the condition (ii) of Lemma~\ref{lemma-minobst-shape}, with $\mathcal{R}$ containing the slots $(a,j)$ and $(b,j)$ but not $(c,j)$ (i.e., 
$\mathcal{R}=\mathcal{P}'[3j-2,3j-2,3j-5]$), and $\mathcal{Q}_3$ containing the slot $(c,j)$ and one of $(a,j)$ and $(b,j)$, say $(b,j)$ (i.e., $\mathcal{Q}_3=\mathcal{P}_3[3j-5,3j-2,3j-2]$. Note also that by the last statement of Lemma~\ref{lemma-minobst-shape}, we know $L^a_{I \setminus (U\setminus \{y_1,y_2\})}[1:3j-2]=L^b_{I \setminus (U\setminus \{y_1,y_2\})}[1:3j-2]=L^c_{I \setminus (U\setminus \{y_1,y_2\})}[1:3j-2]$; let $J$ denote this item set. 

Consider the profile $\mathcal{R}'$ in $\mathcal{P}'$ obtained by adding the slot 
$(c,j)$ along with the corresponding items to $\mathcal{R}$, that is, we set
$\mathcal{R}'=\mathcal{P}'[3j-2,3j-2,3j-2]$. From the above arguments we know
that the items contained in $\mathcal{R}'$ are from $J \cup \{y_3\}$, and therefore $\mathcal{R}'$ contains (at most) one more item than $\mathcal{R}$, and exactly one more slot, implying that $\mathcal{R}'$ is an obstruction as well. But we have already seen that no obstruction in $\mathcal{P}'$ can contain $Y$, a contradiction finishing the proof. 
\qed
\end{proof}
 
Lemma~\ref{lemma-numberofdeletions} implies that simply taking all 
partial solutions of $I(\Q)$ of size~1 or~2 yields a branching set for $\Q$. 

\begin{corollary}
\label{cor-branchingset}
For any minimal obstruction $\Q$ in a profile, a branching set $\Y$ for $\Q$ 
of cardinality at most $|I(\Q)|+\binom{|I(\Q)|}{2}=O(|I|^2)$ 
and with $\max_{Y \in \Y} |Y| \leq 2$
can be constructed in polynomial time.
\end{corollary}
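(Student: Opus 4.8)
The plan is to let $\Y$ consist of \emph{all} partial solutions of $\Q$ of size at most two, and then argue that this family already meets every requirement of a branching set. Concretely, I would set
$$\Y = \{ Y \subseteq I(\Q) : |Y| \le 2 \text{ and } \Q - Y \text{ is solvable} \}.$$
Since $\Q$ is a minimal obstruction it is itself not solvable, so the empty set is never a partial solution and $\Y$ contains only sets of size one or two; this gives $\max_{Y \in \Y}|Y| \le 2$. The number of subsets of $I(\Q)$ of size one or two is $|I(\Q)| + \binom{|I(\Q)|}{2}$, which is $O(|I|^2)$ because $|I(\Q)| \le |I|$, so $\Y$ respects the claimed cardinality bound. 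To construct $\Y$ in polynomial time I would enumerate all these $O(|I|^2)$ candidate subsets and, for each $Y$, test whether $\Q - Y$ is solvable; as recorded in the paragraph on partial solutions, each test reduces to a maximum-matching computation on a graph with $O(|I|)$ vertices and runs in $O(|I|^\omega)$ time, for an overall polynomial running time.

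The substance of the argument is to verify that $\Y$ is genuinely a branching set, i.e., that some minimum-size solution $U$ for $\P$ satisfies $U \cap I(\Q) \in \Y$. I would take an arbitrary minimum-size solution $U$; since any proper subset of $U$ that were itself a solution would contradict minimality, $U$ is in particular inclusion-wise minimal, so Lemma~\ref{lemma-numberofdeletions} applies and yields $|U \cap I(\Q)| \le 2$. It then remains to check that $U \cap I(\Q)$ is a nonempty partial solution. For the partial-solution property, note that deleting from $\Q$ those items of $U$ lying outside $I(\Q)$ changes nothing, so $\Q - (U \cap I(\Q)) = \Q - U$; and $\Q - U$ is a prefix of the solvable profile $\P - U$, so restricting a proportional allocation of $\P - U$ (equivalently a perfect matching of its underlying graph, via Lemma~\ref{lem-propalloc}) to the complete slots of $\Q - U$ and the items matched to them produces a matching covering all complete slots of $\Q - U$, showing that $\Q - U$ is solvable. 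For nonemptiness, observe that if $U \cap I(\Q)$ were empty then $\Q$ would appear unchanged as a prefix of the solvable profile $\P - U$ and hence be solvable, contradicting that it is an obstruction. Thus $1 \le |U \cap I(\Q)| \le 2$ and $U \cap I(\Q)$ is a partial solution, so $U \cap I(\Q) \in \Y$.

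The step I expect to require the most care is the claim that $U \cap I(\Q)$ is a partial solution: one must check that the restriction of a global proportional allocation covers precisely the complete slots of the prefix, which relies on completeness of those slots guaranteeing that eligibility is preserved between $G(\Q - U)$ and the underlying graph of $\P - U$. Once Lemma~\ref{lemma-numberofdeletions} is available, everything else is routine bookkeeping.
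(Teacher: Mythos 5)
Your proposal is correct and matches the paper's intended argument: the paper derives Corollary~\ref{cor-branchingset} directly from Lemma~\ref{lemma-numberofdeletions} by taking exactly the family of all partial solutions of size 1 or 2, which is what you construct. Your additional verifications (that a minimum-size solution is inclusion-wise minimal, that $U \cap I(\Q)$ is a nonempty partial solution via restricting a perfect matching of $\P-U$ to the complete slots of $\Q-U$, and the matching-based polynomial-time test) are precisely the routine details the paper leaves implicit.
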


\subsection{Domination: Obtaining a Smaller Branching Set}
\label{sec:domination}

To exploit Lemma~\ref{lemma-numberofdeletions} in a more efficient manner,
we will rely on an observation about the equivalence of certain item deletions, 
which can be used to reduce the number of possibilities that we have to explore when encountering a minimal obstruction, i.e., the size of our branching set. 
To this end, we need some additional notation. Given a prefix $\mathcal{Q}=\mathcal{P}[i_a,i_b,i_c]$, we define its \emph{tail} as the set $T(\mathcal{Q})$ of items  as follows, depending on the shape of $\mathcal{Q}$. 
\begin{itemize}
\item If $\mathcal{Q}$ has straight shape, then $T(\mathcal{Q})$ contains the last three items contained in $\mathcal{Q}$ for each agent, that is, all items in $L^a[i_a-2:i_a]$, $L^b[i_b-2:i_b]$, and $L^c[i_c-2:i_c]$.
\item If $\mathcal{Q}$ has slant shape with $i_z=i_x-3=i_y-3$ for some choice of agents $x,y$, and~$z$ with $\{x,y,z\}= \{a,b,c\}$, then $T(\mathcal{Q})$ contains the last six items in $\mathcal{Q}$ listed by agents $x$ and $y$, that is, all items in $L^x[i_x-5:i_x]$ and $L^y[i_y-5:i_y]$.

\end{itemize}
Let us state the main property of the tail 
which motivates its definition. 

\begin{lemma}
\label{lemma-tail-prop}
Suppose $\Q$ is a minimum obstruction in $\P$, and $\R$ is a prefix of $\P$
intersecting $\Q$ such that $\R-X$ is a minimum obstruction 
for some item set $X$ with $|X \cap I(\Q)|\leq 2$.
Then any item that occurs more times in $\mathcal{Q}$ than in $\mathcal{R}$ 
must be contained in the tail of $\mathcal{Q}$. 
\end{lemma}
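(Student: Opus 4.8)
The plan is to localize each ``lost occurrence'' of an item and to show that it must sit near the end of $\Q$. Write $\textup{size}(\Q)=(i_a,i_b,i_c)$ and $\textup{size}(\R)=(j_a,j_b,j_c)$, and let $\textup{pos}_x(p)$ denote the rank of an item $p$ in $L^x$. If $p$ occurs more often in $\Q$ than in $\R$, then the set of agents whose $\Q$-portion lists $p$ is not contained in the set whose $\R$-portion lists $p$, so there is an agent $x$ with $j_x<\textup{pos}_x(p)\le i_x$; in particular $i_x>j_x$, i.e. $\Q$ protrudes beyond $\R$ on agent $x$. Thus the whole statement reduces to showing that every such protruding occurrence forces $p\in T(\Q)$, which in turn amounts to controlling how far $\Q$ can stick out past $\R$ on a single agent.

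First I would treat the straight case, where by Lemma~\ref{lemma-minobst-shape} $\Q=\P[i,i,i]$ and $T(\Q)$ consists of the last three items of each agent. Here the goal becomes $i-j_x\le 3$ for every protruding agent $x$, since then $\textup{pos}_x(p)>j_x\ge i-3$ places $p$ among the last three items of $x$, hence in $T(\Q)$. To obtain this bound I would pass to the deleted profile $\P-X$, in which $\R-X$ is a minimal obstruction by hypothesis and $\Q-X=(\P-X)[i-d_a,i-d_b,i-d_c]$ with $\sum_x d_x\le 3|X\cap I(\Q)|\le 6$. Using the obstruction identity $|I(\Q)|=|S(\Q)|-1$ from the proof of Lemma~\ref{lemma-minobst-shape}, deleting the at most two items of $X\cap I(\Q)$ changes the Hall deficiency of $\Q$ only by a bounded amount, so $\Q-X$ is still ``almost tight''. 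Applying submodularity of the neighborhood function to the slot sets $S(\R-X)$ and $S(\Q-X)$ in the bipartite graph of $\P-X$, together with the minimality of $\R-X$ and of $\Q$, I would argue that if $\Q$ protruded beyond $\R$ by more than one slot on agent $x$, then the intersection prefix $\Q\cap\R$ (or a slight enlargement of it, after adding the $X$-items back) would be an obstruction strictly contained in $\Q$, contradicting minimality; this yields $i-j_x\le 3$.

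The slant case is where the real difficulty lies and is the step I expect to be the main obstacle. By Lemma~\ref{lemma-minobst-shape} we have $\Q=\P[i,i,i-3]$ for long agents $a,b$ and short agent $c$ (up to renaming), and $T(\Q)$ is the last six items of $a$ and $b$ only, the short agent contributing nothing. A protruding occurrence on a long agent is handled as above, with the window widened to two slots (hence ``six'') to absorb the position shifts caused by deleting the up to two items of $X\cap I(\Q)$. The genuinely delicate situation is a protruding occurrence on the short agent $c$: an item could a priori be lost from $c$'s portion while sitting near the front of $a$ and $b$, which would fall outside $T(\Q)$. To exclude this I would exploit the last statement of Lemma~\ref{lemma-minobst-shape}, that $L^a[1:i]$ and $L^b[1:i]$ list exactly the set $I(\Q)$, so every item lost from $c$ occurs three times in $\Q$ and can be lost from $\R$ at most once; I would then show, again through the deficiency and minimality bookkeeping in $\P-X$, that such a configuration is incompatible with $\R-X$ being a minimal obstruction that intersects $\Q$, so no lost occurrence can arise on the short agent at all. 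Carefully tracking the slot losses and position shifts induced by the at most two deletions of $X\cap I(\Q)$, and verifying that the resulting windows match exactly the tail sizes three and six rather than something larger, is the technical heart of the argument.
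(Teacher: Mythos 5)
Your opening reduction (a lost occurrence of an item $p$ lives at a position $j_x<\textup{pos}_x(p)\le i_x$ on some agent $x$ where $\Q$ protrudes past $\R$) and your target windows (three positions per agent when $\Q$ is straight; six on the two long agents and none on the short agent when $\Q$ is slant) are exactly right, and they match the closing step of the paper's proof. The gap is in the mechanism you propose for bounding the protrusion. You want supermodularity of the deficiency, applied to $A=S(\R-X)$ and $B=S(\Q-X)$ in $\P-X$, to show that a protrusion of more than one slot forces the intersection prefix to be an obstruction strictly contained in $\Q$. But supermodularity runs the wrong way for this: it gives $\mydef(A\cap B)\ge \mydef(A)+\mydef(B)-\mydef(A\cup B)$, so concluding $\mydef(A\cap B)\ge 1$ requires an \emph{upper} bound on $\mydef(A\cup B)$, and the hypotheses provide none --- the union prefix is contained neither in $\Q$ nor strictly in $\R-X$, so neither minimality assumption applies to it. Indeed the minimality assumptions work against you: when the intersection prefix is strictly contained in the minimal obstruction $\R-X$, it is solvable in $\P-X$, i.e.\ $\mydef(A\cap B)\le 0$, and supermodularity then only says that deficiency accumulates at the union, which contradicts nothing. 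Moreover, your hypothesis ``protrudes by more than one slot'' never actually enters this bookkeeping, so even informally the sketch does not connect the assumption to the desired contradiction. The same objection applies to the unspecified ``deficiency and minimality bookkeeping'' you invoke to rule out losses on the short agent in the slant case.

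The missing idea is much simpler: apply Lemma~\ref{lemma-minobst-shape} to $\R-X$, not only to $\Q$. Since $\R-X$ is a minimal obstruction, its three size components are all $\equiv 1\pmod 3$ and pairwise differ by at most $3$. Combined with $|X\cap I(\Q)|\le 2$, this pins everything down by pure arithmetic: if $\R$ reaches strictly beyond $\Q$ on some agent $y$, then $L^y$ restricted to $\Q$'s range consists of items of $I(\Q)$, at most two of which are in $X$, so the component of $\R-X$ on $y$ is at least the corresponding component of $\Q$ minus $2$, hence --- by the congruence --- at least that component itself. In the straight case $\Q=\P[i,i,i]$, non-containment in both directions then forces $\R-X$ to be slant with components in $\{i-3,i\}$, so every protruding agent has its $\R$-size at least $i-3$ and lost occurrences sit among the last three positions. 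In the slant case $\Q=\P[i,i,i-3]$, if the short-agent component of $\R-X$ were below $i-3$, the shape constraint would push all components of $\R-X$ to at most $i-3$, and the same arithmetic would force $\R\subseteq\Q$, contradicting intersection; hence the short agent loses no occurrence at all, and the long agents' components are at least $i-6$, giving exactly the six-item windows. Without this shape information about $\R-X$ your proof has no source for the ``components differ by at most~$3$'' fact, and the submodularity route does not substitute for it.
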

\begin{proof}
To prove the claim, it will be useful to keep in mind that
both $\mathcal{Q}$ and $\mathcal{R}':=\mathcal{R}-X$ have the shape of a minimal obstruction, so by Lemma~\ref{lemma-minobst-shape}, they either have a straight or a slant shape.
First suppose that $\mathcal{Q}$ has a straight shape and thus $\mathcal{Q}=\mathcal{P}[i,i,i]$ for some index $i$;  let $\mathcal{R}'=\mathcal{P}'[i_a,i_b,i_c]$ where $\mathcal{P}'=\mathcal{P}-X$. 
Since $\mathcal{R}$ is not contained in $\mathcal{Q}$ and $|X \cap I(\Q)| \leq 2$, 
we know that 
$i' \geq i$ for some $i' \in \{i_a,i_b,i_c\}$. On the other hand,  $\mathcal{Q}$ is not contained in $\mathcal{R}$, so $i'' \leq i-3$ for some $i'' \in \{i_a,i_b,i_c\}$. 
Therefore, Lemma~\ref{lemma-minobst-shape} implies that $\mathcal{Q}$ has a slant shape, with $\max\{i_a,i_b,i_c\}=i$ and $\min\{i_a,i_b,i_c\}=i-3$.
Second, suppose now that $\mathcal{Q}$ has a slant shape with $\mathcal{Q}=\mathcal{P}[i,i,i-3]$ for some index $i$ (the remaining two cases are analogous); let again $\mathcal{R}'=\mathcal{P}'[i_a,i_b,i_c]$. Note that $i_c \geq i-3$, as otherwise $\mathcal{R}$ would be contained in $\mathcal{Q}$. From this follows that $\min\{i_a,i_b\} \geq i-6$. It remains to observe that the definition of the tail $T(\mathcal{Q})$ ensures that our claim holds in both cases. 
\qed
\end{proof}

Next, we give a condition that guarantees that some partial solution 
for a minimum obstruction $\Q$ is ``not worse'' than some other. 
Given two sets of items $Y, Y' \subseteq I(\mathcal{Q})$, we say that $Y'$ \emph{dominates} $Y$ with respect to the prefix $\mathcal{Q}$, if 
\begin{itemize}
\item[(1)] $|Y|=|Y'|$, 
\item[(2)] $Y'$ only contains an item from the boundary or the tail of $\Q$ 
if $Y$ also contains that item, i.e., 
$Y' \cap (\delta(\mathcal{Q}) \cup T(\mathcal{Q})) \subseteq 
Y \cap (\delta(\mathcal{Q}) \cup T(\mathcal{Q}))$. 
\end{itemize} 

\begin{lemma}
\label{lemma-dominating-deletion}
If $U$ is an inclusion-wise minimal solution for the profile $\P$,  
$\mathcal{Q}$ is a minimal obstruction in $\mathcal{P}$, 
$Y= U \cap I(\mathcal{Q})$
and $Y' \subseteq I(\mathcal{Q})$ is a partial solution for $\Q$ that dominates $Y$, 
then $U \setminus Y \cup Y'$ is a solution for $\mathcal{P}$.
\end{lemma}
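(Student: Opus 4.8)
The goal is to show that $U' := (U \setminus Y) \cup Y' = (U \setminus I(\Q)) \cup Y'$ is a solution, i.e.\ that $\P - U'$ is solvable. The plan is a proof by contradiction built around minimal obstructions, reusing the machinery of Lemmas~\ref{lem-propalloc}--\ref{lemma-tail-prop}. First I would record that $|Y| = |Y'| \le 2$: the bound $|Y| \le 2$ is exactly Lemma~\ref{lemma-numberofdeletions} (using that $U$ is inclusion-wise minimal), and $|Y'| = |Y|$ is condition~(1) of domination. Next I would simplify the setup by pre-deleting the items of $U \setminus I(\Q)$: since none of them occurs in $\Q$, the prefix $\Q$ is still a minimal obstruction in $\P_0 := \P - (U \setminus I(\Q))$, while $\Q - Y'$ is still solvable, $Y'$ still dominates $Y$, and $\P_0 - Y = \P - U$ is solvable. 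Thus it suffices to prove the statement for $\P_0$ with $U = Y \subseteq I(\Q)$, and I may therefore assume $U \subseteq I(\Q)$.

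Now assume for contradiction that $\P_0 - Y'$ is not solvable, and let $\R$ be a minimal obstruction in it (which exists by Lemma~\ref{lem-propalloc}). Lift $\R$ to the smallest prefix $\hat\R$ of $\P_0$ with $\hat\R - Y' = \R$ and $S(\hat\R) = S(\R)$; this is well defined because deleting items only shifts positions rightward. I would then split on how $\hat\R$ sits relative to $\Q$. If $\hat\R$ is contained in $\Q$, then $\R = \hat\R - Y'$ is a sub-prefix of $\Q - Y'$; since $\Q - Y'$ is solvable and solvability is inherited by sub-prefixes (restrict a covering matching to the smaller slot set, each complete slot keeping its matched item), $\R$ would be solvable, contradicting that it is an obstruction.

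The substantial case is when $\hat\R$ is not contained in $\Q$. Here the plan is to manufacture a minimal obstruction in $\P_0 - Y = \P - U$, contradicting its solvability, by replacing the deleted set $Y'$ inside $\hat\R$ with $Y$. When $\hat\R$ and $\Q$ are intersecting I can invoke Lemma~\ref{lemma-tail-prop} with $X = Y'$ (legitimate as $|Y' \cap I(\Q)| = |Y'| \le 2$), which tells me that every item whose multiplicity in $\Q$ exceeds its multiplicity in $\hat\R$ lies in the tail $T(\Q)$. Combined with domination---which forces every item of $Y' \setminus Y$ to lie outside $\delta(\Q) \cup T(\Q)$, hence to occur exactly three times in $\Q$ and (by the tail lemma) three times in $\hat\R$---this lets me compare the two deletions: swapping $Y'$ for $Y$ removes and reinserts only items that are either boundary/tail items already accounted for by $Y$, or full-multiplicity interior items, so $\hat\R - Y$ changes its item count and its slot count by the same amount as $\hat\R - Y'$. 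Hence $\hat\R - Y$ (or the minimal obstruction contained in it, extracted via Lemma~\ref{lem-propalloc}) still violates Hall's condition, giving the desired obstruction in $\P_0 - Y$. The residual configuration $\Q \subsetneq \hat\R$, which is not covered by Lemma~\ref{lemma-tail-prop}, I would dispose of separately: there all of $Y, Y' \subseteq I(\Q) \subseteq I(\hat\R)$, so $|I(\hat\R - Y)| = |I(\hat\R - Y')|$ by $|Y| = |Y'|$, and a direct count of per-agent item numbers (using the shapes from Lemma~\ref{lemma-minobst-shape} and $|\delta(\Q)| \le 3$ from Lemma~\ref{lemma-boundary}) shows that the slot sets coincide, so $\hat\R - Y$ is again an obstruction.

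The main obstacle I anticipate is the bookkeeping of index shifts: a slot $(x,i)$ is eligible to different item windows of $\P_0$ under the two deletions $Y$ and $Y'$, so translating ``$\R$ is an obstruction after deleting $Y'$'' into ``something is an obstruction after deleting $Y$'' requires carefully tracking per-agent item counts and slot completeness rather than arguing purely at the level of sizes. The conceptual content, however, is exactly the domination hypothesis: because $Y' \setminus Y$ avoids $\delta(\Q) \cup T(\Q)$, the swap never discards a boundary or tail item that $Y$ was relying on, and the tail lemma guarantees that these interior items behave identically in $\Q$ and in any intersecting obstruction $\hat\R$, so the item--slot deficit that makes $\R$ an obstruction is preserved when passing back to $Y$.
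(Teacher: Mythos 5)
Your proposal reproduces the paper's own strategy almost step for step: assume a minimal obstruction $\R$ survives in $\P-U'$, lift it to a prefix $\hat{\R}$ of the original profile, and split into the three cases where $\hat{\R}$ is contained in $\Q$, contains $\Q$, or intersects $\Q$; the contained case is killed because $Y'$ is a partial solution for $\Q$, the containing case by counting, and the intersecting case by Lemma~\ref{lemma-tail-prop} combined with domination. Your preliminary reduction to $U=Y\subseteq I(\Q)$ by pre-deleting $U\setminus I(\Q)$ is a valid, harmless simplification (those items occur only outside the windows of $\Q$, so $\Q$ stays a minimal obstruction and $Y$ stays an inclusion-minimal solution). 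However, there is a genuine gap in the intersecting case, and it sits exactly at the point you dismissed as ``bookkeeping.''

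You correctly deduce that every item of $Y'\setminus Y$ occurs three times in $\Q$ and, by Lemma~\ref{lemma-tail-prop}, three times in $\hat{\R}$; but you then claim that swapping the deletion of $Y'$ for the deletion of $Y$ changes the item count and the slot count of $\hat{\R}$ ``by the same amount.'' Nothing supports this for the other half of the swap: domination constrains $Y'$, not $Y$, so $Y\setminus Y'$ may well contain items of $T(\Q)$ or $\delta(\Q)$, and Lemma~\ref{lemma-tail-prop} explicitly allows such an item to occur fewer times in $\hat{\R}$ than in $\Q$ --- possibly zero times. If some $y\in Y\setminus Y'$ does not occur in $\hat{\R}$ at all, deleting $Y$ removes strictly fewer items and strictly fewer positions from $\hat{\R}$ than deleting $Y'$ does; for instance, with $|Y|=|Y'|=1$ the prefix $\hat{\R}-Y$ then has one more item and one more position per agent than the minimal obstruction $\hat{\R}-Y'$, so (since the sizes of $\hat{\R}-Y'$ are $\equiv 1 \bmod 3$) the number of \emph{complete} slots of $\hat{\R}-Y$ equals $|I(\hat{\R}-Y)|$, and the Hall deficit you want to transfer simply vanishes: no contradiction follows from counting alone. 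This is precisely why the paper's proof asserts only one-sided inequalities in this case (deleting $U$ removes \emph{at most} as many items from each agent's list inside $\R$ as deleting $U'$, whence $|S(\R-U)|\ge|S(\R-U')|$ and $|I(\R-U)|\ge|I(\R-U')|$); the equality of item counts is available only in the case $\Q$ contained in $\R$, where all of $Y$ lies in $I(\R)$. As written, your intersecting case establishes those inequalities but not the unsolvability of $\hat{\R}-Y$, so the argument is incomplete exactly where the lemma's real difficulty lies --- to close it you must additionally control the items of $Y\setminus Y'$ that are missing from $\hat{\R}$, which your equality claim presupposes rather than proves.
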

\begin{proof}
Let $U'=U \setminus Y \cup Y'$.
For contradiction, suppose that $\mathcal{R}'$ is a minimal obstruction in $\mathcal{P}-U'$. Let us consider a prefix $\mathcal{R}$ of $\mathcal{P}$ for which $\mathcal{R}-U'=\mathcal{R}'$. Depending on the relation between $\mathcal{R}$ and $\mathcal{Q}$, we distinguish three cases. 

First, suppose $\mathcal{Q}$ contains $\mathcal{R}$. Then $\mathcal{Q}-U'=\mathcal{Q}-Y'$ contains $\mathcal{R}-U'=\mathcal{R}'$. However, 
as $Y'$ is a partial solution for $\Q$, 
there is no obstruction in $\mathcal{Q}-Y'$, a contradiction.

Second, assume that $\mathcal{R}$ contains $\mathcal{Q}$. We are going to prove that in this case $\mathcal{R}-U$ is not solvable, contradicting the assumption that $U$ is a solution. 
Observe that $Y \cup Y' \subseteq I(\mathcal{Q}) \subseteq I(\mathcal{R})$, 
so by $U \triangle U'= Y \triangle Y'$ we know that 
$|I(\mathcal{R}-U')|=|I(\mathcal{R}-U)|$. Moreover, from $Y' \cap \delta(\mathcal{Q}) \subseteq Y \cap \delta(\mathcal{Q})$ 
it follows that deleting $U$ removes at most as many items  \emph{from each agent's list} in $\mathcal{R}$ as deleting $U'$: indeed, any item $u' \in U'$ either is contained in $U$ as well, or appears three times in $\mathcal{Q}$ (and hence in $\mathcal{R}$). 
This implies that $|S(\mathcal{R}-U')| \leq |S(\mathcal{R}-U)|$. Using that $\mathcal{R}'=\mathcal{R}-U'$ is an obstruction, we get that 
$|S(\mathcal{R}-U)| \geq |S(\mathcal{R}-U')| > |I(\mathcal{R}-U')|=|I(\mathcal{R}-U)|$ and thus $\R-U$ is an obstruction too, a contradiction. 

Third, suppose that neither $\mathcal{Q}$ nor $\mathcal{R}$ contains the other.
We apply Lemma~\ref{lemma-tail-prop} with $U'$ taking the role of the item set $X$ 
(and $\Q$ and $\R$ keeping their meaning); 
note that the conditions of the lemma are fulfilled, 
since $|U' \cap I(\Q)| = |Y|'=|Y| \leq 2$ by Lemma~\ref{lemma-numberofdeletions}.
From the lemma we get that any item that occurs more times in $\mathcal{Q}$ than in $\mathcal{R}$ is contained in the tail $T(\mathcal{Q})$.

Now, since $Y'$ dominates $Y$, we know that $Y' \cap (\delta(\mathcal{Q}) \cup T(\mathcal{Q})) \subseteq Y \cap (\delta(\mathcal{Q}) \cup T(\mathcal{Q}))$. 
Hence, deleting $U$ removes at most as many items  \emph{from each agent's list} in $\mathcal{R}$ as deleting $U'$: indeed, any item $u' \in U'$ either is contained in $U$ as well, or $u' \notin \delta(\mathcal{Q}) \cup T(\mathcal{Q})$, which means that $u'$ appears (three times) in $\mathcal{Q}$ and, using our claim, also in $\mathcal{R}$. 
This observation implies, on the one hand, $|S(\mathcal{R}-U')| \leq |S(\mathcal{R}-U)|$, and on the other hand, $|I(\mathcal{R}-U)| \geq |I(\mathcal{R}-U')|$. Using that
$\mathcal{R}-U'$ is a minimal obstruction, we get 
$|S(\mathcal{R}-U)| \geq |S(\mathcal{R}-U')|> |I(\mathcal{R}-U)| \geq |I(\mathcal{R}-U')|$, yielding that $\mathcal{R}-U$ cannot be solvable, contradicting our assumption that $U$ is a solution for $\P$. 
\qed
\end{proof}

Lemma~\ref{lemma-dominating-deletion} means that if a branching set $\Y$ 
contains two different partial solutions $Y$ and $Y'$ for a minimum obstruction
such that $Y'$ dominates $Y$, 
then removing $Y$ from $\Y$ still results in a branching set. 
Using this idea, we can construct a branching set of constant size.

\begin{lemma}
\label{lemma-small-branchingset}
There is a polynomial-time algorithm that, 
given a minimal obstruction $\Q$ in the profile $\P$,
produces a branching set $\mathcal{Y}$  
with $\max_{Y \in \Y} |Y|  \leq 2$
and $|\Y|=O(1)$. 
\end{lemma}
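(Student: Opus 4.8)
The plan is to start from the quadratic-size branching set guaranteed by Corollary~\ref{cor-branchingset} and then repeatedly discard dominated partial solutions via Lemma~\ref{lemma-dominating-deletion} until only constantly many remain. The structural fact that makes this possible is that the set $B := \delta(\Q) \cup T(\Q)$ has constant size: Lemma~\ref{lemma-boundary} gives $|\delta(\Q)| \le 3$, and the definition of the tail gives $|T(\Q)| \le 9$ when $\Q$ is straight and $|T(\Q)| \le 12$ when $\Q$ is slant, so $|B| = O(1)$ in every case.

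The key observation I would exploit is that the domination relation only ``sees'' a partial solution through its cardinality and its intersection with $B$: by definition $Y'$ dominates $Y$ whenever $|Y'|=|Y|$ and $Y' \cap B \subseteq Y \cap B$, so two partial solutions of equal size with the same intersection with $B$ dominate each other. I would therefore classify candidate partial solutions of size $s \in \{1,2\}$ by the pair $(s, Y \cap B)$ and keep a single representative per class. Concretely, for every $s \in \{1,2\}$ and every $Z \subseteq B$ with $|Z| \le s$, the algorithm searches for a partial solution of the form $Y = Z \cup Y_0$, where $Y_0$ is a set of $s-|Z|$ items from $I(\Q) \setminus B$ (so that $Y \cap B = Z$ holds exactly), and adds one such $Y$ to $\Y$ if it exists. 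Since $s - |Z| \le 2$, there are only $O(|I|^2)$ candidate sets $Y_0$, each testable as a partial solution in $O(|I|^\omega)$ time by a matching computation, and the number of pairs $(s,Z)$ is at most $2 \cdot 2^{|B|} = O(1)$; hence the running time is polynomial, $|\Y| = O(1)$, and $\max_{Y \in \Y}|Y| \le 2$, as required.

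For correctness I would verify that $\Y$ is a branching set. Fix a minimum-size solution $U$; it is inclusion-wise minimal, so both Lemma~\ref{lemma-numberofdeletions} and Lemma~\ref{lemma-dominating-deletion} apply to it. Set $Y^\ast = U \cap I(\Q)$, a partial solution for $\Q$ with $|Y^\ast| \le 2$. When the algorithm processes the pair $(|Y^\ast|, Y^\ast \cap B)$ it finds at least one representative $Y'$ with $|Y'| = |Y^\ast|$ and $Y' \cap B = Y^\ast \cap B$ (because $Y^\ast$ itself witnesses the existence of such a partial solution). Then $Y'$ dominates $Y^\ast$, so Lemma~\ref{lemma-dominating-deletion} gives that $U' = U \setminus Y^\ast \cup Y'$ is a solution. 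Since $Y^\ast = U \cap I(\Q)$, the set $U \setminus Y^\ast$ is disjoint from $I(\Q)$, whereas $Y' \subseteq I(\Q)$ and $|Y'| = |Y^\ast|$; hence $|U'| = |U|$ and $U' \cap I(\Q) = Y' \in \Y$, so $U'$ is a minimum-size solution placing an element of $\Y$ on $\Q$.

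The step I expect to require the most care is ensuring that every deletion performed during the pruning is justified by a \emph{kept} representative that genuinely dominates the discarded set---that is, that the representative search outputs a set whose intersection with $B$ is exactly the prescribed $Z$ rather than a larger subset. This is precisely why I draw the remaining $s-|Z|$ items from $I(\Q) \setminus B$: with that constraint the equality $Y \cap B = Z$ holds by construction, and the symmetric-domination observation then closes the argument with no further case analysis.
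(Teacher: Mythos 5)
Your proposal is correct and follows essentially the same route as the paper: both reduce the quadratic-size branching set of Corollary~\ref{cor-branchingset} to constant size by combining Lemma~\ref{lemma-numberofdeletions}, the domination swap of Lemma~\ref{lemma-dominating-deletion}, and the fact that $|\delta(\Q) \cup T(\Q)| = O(1)$ (Lemma~\ref{lemma-boundary} plus the definition of the tail). The only difference is organizational: the paper iteratively prunes dominated partial solutions and then bounds the survivors by a pigeon-hole argument on their intersections with $\delta(\Q) \cup T(\Q)$, whereas you directly construct one representative per class $(s, Z)$ with $Z \subseteq \delta(\Q) \cup T(\Q)$ --- exactly the classes that the paper's pigeon-hole argument implicitly counts.
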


\begin{proof}
First observe that for any two item sets $Y$ and $Y'$ in $\Q$, we can decide 
whether $Y$ dominates $Y'$ in $O(\min(|Y|,|Y'|))$ time. 
Hence, we can simply start from the branching set $\Y$ guaranteed by Corollary~\ref{cor-branchingset}, and check for each $Y \in \Y$ whether there exists some $Y' \in \Y$ that dominates $Y$; if so, then we remove $Y$. 
By Lemma \ref{lemma-dominating-deletion}, at the end of this process 
the set family $\Y$ obtained is a branching set.

We claim that $\Y$ has constant size. To see this, observe that if 
$Y_1$ and $Y_2$ are both in $\Y$ and have the same size, then both $Y_1 \setminus Y_2$ and $Y_2 \setminus Y_1$ contain an element from $T(\Q) \cup \delta(\Q)$. 
Thus, we can bound $|\Y|$ using the pigeon-hole principle:
first, $\Y$ may contain at most $|T(\Q) \cup \delta(\Q)|$ 
partial solutions of size 1, and second, 
it may contain at most $\binom{|T(\Q) \cup \delta(\Q)|}{2}$ 
partial solutions of size 2. Recall that $|T(\Q)| \leq 9$ by definition, and 
we also have $|\delta(\Q)| \leq 3$ by Lemma~\ref{lemma-boundary}, proving our claim. 
\qed
\end{proof}

\subsection{Polynomial-Time Algorithm for PID for Three Agents}

Let us now present our algorithm for solving PID on our profile $\mathcal{P}=(N,I,L)$. 

We are going to build the desired solution step-by-step,
iteratively extending an already found partial solution.
Namely, we propose an algorithm $\MinDel(\T,U)$ that, 
given a prefix $\T$ of $\P$ and a partial solution $U$ for $\T$,
returns a solution $S$ for $\P$ for which $S \cap I(\T)=U$, 
and has minimum size among all such solutions.
We refer to the set $S \setminus U$ as an \emph{extension} for $(\T,U)$; 
note that an extension for $(\T,U)$ only contains items from $I \setminus I(\T)$.
We will refer to the set of items in $I(\T) \setminus U$ as 
\emph{forbidden} w.r.t. $(\T,U)$. 

\smallskip
\noindent
{\bf Branching set with forbidden items.}
To address the problem of finding an extension for $(\T,U)$, we 
modify the notion of a branching set accordingly.
Given a minimal obstruction $\Q$ and a set $F \subseteq I(\Q)$ of 
items, we say that a family $\Y$ 
of partial solutions for $\Q$ is 
a \emph{branching set for $\Q$ forbidding $F$}, if the following holds: 
either there exists a solution $U$ for the profile $\P$ that is 
disjoint from $F$ and has minimum size among all such solutions, and moreover, fulfills $U \cap I(Q) \in \Y$, 
or $\P$ does not admit any solution disjoint from $F$.

\begin{lemma}
\label{lemma-small-forbidding-branchingset}
There is a polynomial-time algorithm that, 
given a minimal obstrucion $\Q$ in a profile and a set $F \subseteq I(\Q)$
of forbidden items, produces a branching set $\Y$ forbidding $F$ 
with $\max_{Y \in \Y} |Y| \leq 2$ and $|\Y|=O(1)$.
\end{lemma}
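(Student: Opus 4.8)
The plan is to adapt the domination-based construction from Lemma~\ref{lemma-small-branchingset} so that it respects the forbidden set $F$. Recall that Lemma~\ref{lemma-small-branchingset} produces a constant-size branching set $\Y$ by starting from the quadratic-size branching set of Corollary~\ref{cor-branchingset} and repeatedly discarding any partial solution $Y$ that is dominated by another $Y' \in \Y$. The key difficulty here is that the notion of domination we may use is now constrained: when we want to argue (via Lemma~\ref{lemma-dominating-deletion}) that discarding $Y$ in favour of a dominating $Y'$ is safe, we must ensure that $Y'$ does not use any forbidden item, since the solution we are ultimately looking for must be disjoint from $F$. Thus the first thing I would do is restrict attention to partial solutions disjoint from $F$: starting from the branching set of Corollary~\ref{cor-branchingset}, I discard every $Y$ with $Y \cap F \neq \emptyset$, obtaining a family $\Y_0$ of partial solutions each disjoint from $F$, still of polynomial size and with $\max_{Y\in\Y_0}|Y|\le 2$.

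Next I would verify that $\Y_0$ is indeed a branching set for $\Q$ forbidding $F$. Suppose $U$ is a minimum-size solution for $\P$ among those disjoint from $F$ (if none exists, the defining condition of a forbidding branching set holds vacuously). By Lemma~\ref{lemma-numberofdeletions}, $|U\cap I(\Q)|\le 2$, and since $U$ is disjoint from $F$, the set $Y=U\cap I(\Q)$ is a partial solution for $\Q$ of size at most $2$ that is disjoint from $F$; hence $Y\in\Y_0$ before any domination-based pruning, as Corollary~\ref{cor-branchingset} includes all partial solutions of size at most~$2$. This establishes that $\Y_0$ is a branching set forbidding $F$.

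Then I would apply the domination-pruning step, but only deleting $Y$ in favour of a dominating $Y'$ when $Y'\in\Y_0$, i.e. when $Y'$ is itself disjoint from $F$. The correctness of this pruning follows from Lemma~\ref{lemma-dominating-deletion}: if $U$ is a minimal solution disjoint from $F$ with $Y=U\cap I(\Q)$, and $Y'$ dominates $Y$ with $Y'$ disjoint from $F$, then $U'=U\setminus Y\cup Y'$ is again a solution, it is still disjoint from $F$ (since $Y'\cap F=\emptyset$ and $U\setminus Y$ was already disjoint from $F$), and it has the same size as $U$, so it is again a minimum-size solution disjoint from $F$ with $U'\cap I(\Q)=Y'\in\Y_0$. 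Hence discarding $Y$ preserves the forbidding-branching-set property. The main obstacle to watch for is precisely this disjointness bookkeeping: we must never pivot to a dominating partial solution that reintroduces a forbidden item, which is why the pruning must be carried out within $\Y_0$ rather than within the full branching set of Corollary~\ref{cor-branchingset}.

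Finally, the size bound is obtained exactly as in Lemma~\ref{lemma-small-branchingset}: after pruning, any two distinct partial solutions of equal size in $\Y$ must differ on an element of $T(\Q)\cup\delta(\Q)$, so by the pigeon-hole principle $\Y$ contains at most $|T(\Q)\cup\delta(\Q)|$ singletons and at most $\binom{|T(\Q)\cup\delta(\Q)|}{2}$ pairs. Since $|T(\Q)|\le 9$ by definition and $|\delta(\Q)|\le 3$ by Lemma~\ref{lemma-boundary}, we conclude $|\Y|=O(1)$. All steps---filtering out partial solutions meeting $F$, checking domination in $O(\min(|Y|,|Y'|))$ time per pair, and pruning---run in polynomial time, giving the claimed algorithm.
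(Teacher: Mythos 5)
Your proof is correct and follows essentially the same route as the paper, which adapts Lemma~\ref{lemma-small-branchingset} by discarding at the outset every partial solution that meets $F$ and then pruning by domination within the remaining family. The paper leaves the verification as "easy to check"; your write-up simply fills in those details (in particular, that a minimum-size $F$-disjoint solution is inclusion-wise minimal and that the swapped solution $U\setminus Y\cup Y'$ stays disjoint from $F$), which matches the intended argument.
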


\begin{proof}
The algorithm given in Lemma~\ref{lemma-small-branchingset} 
can be adapted in a straightforward fashion to take forbidden items into account: 
it suffices to simply discard in the first place any subset 
$Y \subseteq I(\Q)$ that is not disjoint from $F$. 
It is easy to verify that this modification indeed yields an algorithm
as desired.
\qed
\end{proof}

\noindent
{\bf Equivalent partial solutions.}
We will describe $\MinDel$ as a recursive algorithm, but in order to ensure
that it runs in polynomial time, we need to apply dynamic programming. 
For this, we need a notion of equivalence: we say that two partial solutions $U_1$ and $U_2$ for $\T$ are \emph{equivalent}  if 
\begin{enumerate}
\item $|U_1|=|U_2|$, and
\item $(\T,U_1)$ and $(\T,U_2)$ admit the same extensions.
\end{enumerate}

Ideally, whenever we perform a call to $\MinDel$ with a given input $(\T,U)$, 
we would like to first check whether an equivalent call has already been performed, i.e., whether $\MinDel$ has been called with an input $(\T,U')$ for which $U$ and $U'$ are equivalent. However, the above definition of equivalence is computationally hard to handle: there is no easy way to check whether two partial solutions admit the same extensions or not. 
To overcome this difficulty, we will use a stronger condition that implies equivalence.

\smallskip
\noindent
{\bf Deficiency and strong equivalence.}
Consider a solvable prefix $\Q$ of $\P$. 
We let the \emph{deficiency} of $\Q$, 
denoted by $\mydef(\Q)$, be the value $|S(\Q)|-|I(\Q)|$. 
Note that due to possibly incomplete slots in $\Q$, the deficiency of $\Q$ may be positive even though $\Q$ is solvable. However, if $\Q$ contains only complete slots, then its solvability implies $\mydef(\Q) \leq 0$. 
We define the \emph{deficiency pattern} of $\Q$ as the set of all 
triples $$(\mysize(\Q \cap \R),\mydef(\Q \cap \R),I(\Q \cap \R) \cap \delta(Q))$$ 
where $\R$ can be any prefix with a straight or a slant shape that intersects $\Q$.
Roughly speaking, the deficiency pattern captures all the information about $\Q$ 
that is relevant for determining whether a given prefix intersecting $\Q$ 
is a minimal obstruction or not.

Now, we call the partial solutions $U_1$ and $U_2$ for $\T$
\emph{strongly equivalent}, if 
\begin{enumerate}
\item $|U_1|=|U_2|$,
\item $U_1 \cap \delta(\T)=U_2 \cap \delta(\T)$, and
\item $\T-U_1$ and $\T-U_2$ have the same deficiency pattern. 
\end{enumerate} 
As the name suggests, strong equivalence is a sufficient condition for equivalence. 

\begin{lemma}
\label{lemma-equivalent-calls}
If  $U_1$ and $U_2$ are strongly equivalent partial solutions for $\T$, 
then they are equivalent as well.  
\end{lemma}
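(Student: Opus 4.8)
The plan is to show that strong equivalence of $U_1$ and $U_2$ implies that $(\T,U_1)$ and $(\T,U_2)$ admit exactly the same extensions; since condition~(1) of strong equivalence already gives $|U_1|=|U_2|$, this yields equivalence. By symmetry it suffices to prove that every extension $E$ for $(\T,U_1)$ is also an extension for $(\T,U_2)$. Recall that an extension only contains items from $I \setminus I(\T)$, so $E$ is disjoint from $I(\T)$, and that $E$ being an extension for $(\T,U_i)$ means precisely that $U_i \cup E$ is a solution for $\P$, i.e.\ $\P - (U_i \cup E)$ admits a proportional allocation. So I would fix an extension $E$ for $(\T,U_1)$ and argue that $\P-(U_2 \cup E)$ has no minimal obstruction, using Lemma~\ref{lem-propalloc} (a profile is solvable iff it contains no minimal obstruction).

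The key step is a contrapositive/exchange argument: suppose $\R'$ is a minimal obstruction in $\P-(U_2 \cup E)$; I would produce a minimal obstruction in $\P-(U_1 \cup E)$, contradicting that $E$ is an extension for $(\T,U_1)$. Here I would split on how the prefix $\R$ (with $\R-(U_2\cup E)=\R'$) relates to $\T$. When $\R$ is contained in $\T$, the obstruction lives entirely ``inside'' the prefix $\T$, so it is governed only by the restriction of the deleted set to $I(\T)$, namely $U_2$ versus $U_1$; condition~(3), equality of deficiency patterns of $\T-U_1$ and $\T-U_2$, is designed exactly to transfer such an obstruction, while condition~(2), $U_1\cap\delta(\T)=U_2\cap\delta(\T)$, ensures the boundary items (the only items in $\T$ that can still be ranked outside $\T$) are treated identically. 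When $\R$ properly contains or intersects $\T$, the obstruction straddles the boundary, and I would compare the item-count and slot-count of $\R-(U_1\cup E)$ against $\R-(U_2\cup E)$: outside $I(\T)$ the two deleted sets agree (both contain $E$), and the deficiency-pattern information together with the boundary condition controls the contribution of the part inside $\T$. The crucial point, mirroring the counting used in Lemma~\ref{lemma-dominating-deletion}, is that matching the triple $(\mysize(\cdot),\mydef(\cdot),I(\cdot)\cap\delta(\T))$ for every intersecting straight or slant prefix $\R$ guarantees that $\R-(U_1\cup E)$ has at least as large a slot-minus-item deficiency as $\R-(U_2\cup E)$, so the failure of Hall's condition is preserved.

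I expect the main obstacle to be the bookkeeping in the intersecting case: one must verify that the deficiency pattern of $\T-U_i$, which records data only for prefixes $\R$ intersecting $\T$, supplies exactly the quantities needed to compare $|S(\R-(U_1\cup E))|-|I(\R-(U_1\cup E))|$ with the corresponding quantity for $U_2$. The subtlety is that $\R$ is a minimal obstruction in the whole profile $\P-(U_2\cup E)$, whereas the pattern speaks about intersections $\T\cap\R$; I would have to argue that replacing $U_2$ by $U_1$ changes neither the portion of $\R$ outside $I(\T)$ (since $E$ and the out-of-$\T$ part are untouched and boundary items are handled by condition~(2)) nor, by condition~(3), the relevant size/deficiency of the in-$\T$ portion. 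Once this is established for each of straight and slant shapes of $\R'$—the only shapes a minimal obstruction can take, by Lemma~\ref{lemma-minobst-shape}—the contradiction follows, completing the proof.
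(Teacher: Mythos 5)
Your overall route is the same as the paper's: fix an extension $E$ for $(\T,U_1)$, assume a minimal obstruction $\R'$ survives in $\P-(U_2\cup E)$, and derive a contradiction by showing that the equal-sized prefix of $\P-(U_1\cup E)$ is then also unsolvable, using condition~(2) to control boundary items and condition~(3) to control the part inside $\T$. In the genuinely intersecting case your sketch matches the paper's argument step for step.

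The genuine flaw is that you invoke the deficiency pattern in cases where it is, by definition, silent. The pattern of $\T-U_i$ records triples $(\mysize,\mydef,\cdot)$ only for prefixes that \emph{intersect} $\T-U_i$, where intersecting means neither prefix contains the other. So when $\R$ is contained in $\T$, condition~(3) is not ``designed exactly to transfer such an obstruction''---it carries no information whatsoever about prefixes lying inside $\T$, and your proposed transfer has no data to run on. That case must instead be dismissed outright: $U_2$ is by hypothesis a partial solution for $\T$, so $\T-U_2$ is solvable, and solvability is inherited by contained prefixes (a slot complete in the smaller prefix is complete in the larger one, and all items eligible for it lie in the smaller prefix, so the covering matching restricts); hence no obstruction of $\P-(U_2\cup E)$ can be contained in $\T-U_2$. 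This is exactly how the paper kills this case in one line. Similarly, when $\R$ properly contains $\T$, the pattern is again unavailable (containment excludes intersection); there the paper uses only conditions~(1) and~(2): every item of $U_1 \triangle U_2$ avoids $\delta(\T)$ and so occurs three times inside $\T$, whence the two equal-sized prefixes have equal item counts and equal slot counts, and unsolvability transfers directly. With these two repairs---vacuity of the contained case, and a direct count in the containing case---your argument coincides with the paper's proof.
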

\begin{proof}
Suppose that $W$ is an extension for $(\T,U_1)$, we need to prove that it is an extension for $(\T,U_2)$ as well. 
Clearly, we have $W \subseteq I \setminus I(\T)$, so let us show that 
$\P-(U_2 \cup W)$ is solvable. 

Suppose for contradiction that $\Q_2$ is a minimum obstruction in $\P-(U_2 \cup W)$. 
Let us consider the prefix $\Q_1$ of $\P-(U_1 \cup W)$ that has the 
same size as $\Q_2$. 
In the remainder of the proof, we argue that $\Q_1$ is not solvable, contradicting to the assumption that $W$ is an extension for $(\T,U_1)$ 
(and thus $U_1 \cup W$ is a solution for $\P$). 
Note that $\Q_1$ and $\Q_2$ clearly have the same slots, all of them complete. 

Recall that $U_2$ is a partial solution for $\T$, so $\T-U_2$ is solvable.
Hence, $\Q_2$ cannot be contained in $\T-U_2$. 

First, let us assume that $\Q_2$ contains $\T-U_2$. 
In this case, $|U_1|=|U_2|$ and $U_1 \cap \delta(\T)=U_2 \cap \delta(\T)$
together immediately imply that $\Q_1$ and $\Q_2$ contain the same number of items: 
$|I(\Q_1)|=|I(\Q_2)|$. 
Hence, we get $|I(\Q_1)|=|I(\Q_2)|<|S(\Q_2)|=|S(\Q_2)|$, proving our claim. 

Second, let us assume now that $\Q_2$ and $T-U_2$ are intersecting, and let their intersection be $\T_2^{\cap}$. 
Since $\Q_2$ is a minimum obstruction and thus has a straight or a slant shape, 
we know that $(\mysize(\T_2^{\cap}),\mydef(\T_2^{\cap}),I(\T_2^{\cap}) \cap \delta(\T-U_2))$ is contained in the deficiency pattern of $\T-U_2$. 
By the third condition of equivalence, the same triple must also be present in the deficiency pattern of $\T-U_1$. 
Now, let $\T_1^{\cap}$ be the prefix in $\T-U_1$ that has the same size as $\T_2^{\cap}$. Then $\T_1^{\cap}$ must have the same deficiency as $\T_2^{\cap}$, 
implying 
$|I(\T_1^{\cap})|=|I(\T_2^{\cap})|$. 
Moreover, we also get 
$I(\T_2^{\cap}) \cap \delta(\T-U_2)=I(\T_1^{\cap}) \cap \delta(\T-U_1)$. 
Recall that $U_1 \cap \delta(\T) = U_2 \cap \delta(\T)$
is guaranteed by the second condition of strong equivalence. 
Hence, adding the items contained in $\Q_2$ but not in $\T$ increases the size 
of $I(\T_2^{\cap})$ exactly as adding the items contained in $\Q_1$ but not in $\T$ increases the size of $I(\T_1^{\cap})$. Therefore, we can conclude that $|I(\Q_1)|=|I(\Q_2)|$, which again implies that $\Q_1$ is not solvable. 
\qed
\end{proof}


Now, we are ready to describe the $\MinDel$ algorithm in detail.
Let $(\T,U)$ be the input for $\MinDel$. 
Throughout the run of the algorithm, we will store all inputs with which 
$\MinDel$ has been computed in a 
table $\TableComp$, 
keeping track of the corresponding solutions for $\P$ as well. 
Initially, $\TableComp$ is empty.

{\bf Step 0: Check for strongly equivalent inputs.}
For each $(\T,U')$ in $\TableComp$, check whether $U'$ and $U$ are strongly equivalent, and if so, return $\MinDel(\T,U')$.
 
{\bf Step 1: Check for trivial solution.} 
Check if $\P-U$ is solvable. 
If so, then store the entry $(\T,U)$ together with the solution $U$ in $\TableComp$, and return $U$. 

{\bf Step 2: Find a minimal obstruction.} 
Find a minimal obstruction $\Q$ in $\P-U$; 
recall that $\P-U$ is not solvable in this step.
Let $\T'$ be the prefix of $\P$ for which $\T'-U=\Q$.

{\bf Step 3: Compute a branching set.} 
Using Lemma~\ref{lemma-small-forbidding-branchingset},
determine a branching set $\mathcal{Y}$ for $\Q$ 
forbidding $I(\T) \setminus U$. 
If $\Y=\emptyset$, then stop and reject. 

{\bf Step 4: Branch.} For each $Y \subseteq \mathcal{Y}$,  
compute $S_Y:=\MinDel(\T',U \cup Y)$. 

{\bf Step 5: Find a smallest solution.} 
Compute a set $S_{Y^\star}$ for which $|S_{Y^\star}|=\min_{Y \in \Y} |S_Y|$. 
Store the entry $(\T,U)$ together with the solution $S_{Y^\star}$ in $\TableComp$, 
and return $S_{Y^\star}$.

\begin{lemma}
\label{lemma-soundness}
Algorithm $\MinDel$ is correct, i.e., 
for any prefix $\T$ of $\P$ and any partial solution $U$ for $\T$, 
$\MinDel(\T,U)$ returns a solution $S$ for $\P$ with $S \cap I(\T)=U$, 
having minimum size among all such solutions (if existent).
\end{lemma}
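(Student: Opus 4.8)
The plan is to prove correctness of $\MinDel$ by induction on the number of items lying outside the prefix $\T$, i.e., on $|I \setminus I(\T)|$. This is the natural induction measure because every recursive call in Step~4 replaces $(\T,U)$ with $(\T',U\cup Y)$ where $\T'$ strictly contains $\T$ (since $\Q$ is a nonempty minimal obstruction in $\P-U$, so $I(\T') \supsetneq I(\T)$); thus the number of outside items strictly decreases, guaranteeing termination and giving us the inductive hypothesis for each recursive call. I would first dispose of Step~0 and Step~1: Step~0 is justified by Lemma~\ref{lemma-equivalent-calls}, since strong equivalence implies equivalence, and equivalent inputs $(\T,U)$ and $(\T,U')$ with $|U|=|U'|$ admit exactly the same extensions, so returning the stored answer is correct. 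Step~1 handles the base-case-like situation: if $\P-U$ is solvable then $U$ itself is a solution for $\P$ with $U \cap I(\T)=U$, and it is trivially of minimum size among solutions $S$ with $S \cap I(\T)=U$ (any such $S$ contains $U$, so $|S|\geq|U|$).

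The heart of the argument is to show that after reaching Step~2 with a minimal obstruction $\Q$ in $\P-U$, the branching in Steps~4--5 is both \emph{sound} (everything returned is a genuine solution of the required form) and \emph{complete} (it finds one of minimum size). For soundness, I would argue that each $S_Y=\MinDel(\T',U\cup Y)$ is, by the induction hypothesis, a solution for $\P$ satisfying $S_Y \cap I(\T')=U\cup Y$; intersecting with $I(\T)\subseteq I(\T')$ and using $Y \subseteq I(\Q)=I(\T')\setminus I(\T)$ (so $Y$ is disjoint from $I(\T)$) yields $S_Y \cap I(\T)=U$, as desired, and hence $S_{Y^\star}$ has the required form. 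For completeness, let $U^\star$ be a minimum-size solution for $\P$ with $U^\star \cap I(\T)=U$; I must exhibit some $Y \in \Y$ with $|S_Y| \leq |U^\star|$. Here is where the branching set does its work: $\Y$ is a branching set for $\Q$ forbidding $F=I(\T)\setminus U$, and since $U^\star$ is disjoint from $F$ (because $U^\star \cap I(\T)=U$), the defining property of a forbidding branching set (Lemma~\ref{lemma-small-forbidding-branchingset}) guarantees there is a minimum-size solution $U'$ disjoint from $F$ with $U' \cap I(\Q) \in \Y$. I would then set $Y^{\dagger}:=U' \cap I(\Q)$ and verify that $U'$ witnesses a solution for the subproblem $(\T',U\cup Y^{\dagger})$: indeed $U' \cap I(\T')=(U' \cap I(\T))\cup(U'\cap I(\Q))=U \cup Y^{\dagger}$, using $U' \cap I(\T)=U$ (which follows from $U'$ disjoint from $F$ and minimality forcing $U \subseteq U'$, or more directly from the branching-set definition tied to the original $U$). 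By the induction hypothesis $\MinDel(\T',U\cup Y^{\dagger})$ returns a solution of size at most $|U'|=|U^\star|$, so $|S_{Y^\star}| \leq |U^\star|$.

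The main obstacle I anticipate is a subtle bookkeeping issue in the completeness step: the definition of ``branching set forbidding $F$'' (as stated before Lemma~\ref{lemma-small-forbidding-branchingset}) guarantees a \emph{global} minimum-size solution $U'$ disjoint from $F$ with $U'\cap I(\Q)\in\Y$, whereas I need a minimum-size solution subject to the additional constraint $U'\cap I(\T)=U$. I would reconcile these by observing that any solution disjoint from $F=I(\T)\setminus U$ automatically satisfies $U' \cap I(\T) \subseteq U$, and that $U \subseteq U'$ holds for the minimum-size $U'$ because $U$ is forced: the forbidding condition was set up precisely so that the partial solution $U$ already committed to on $\T$ is respected. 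Making this alignment airtight—that the global minimum over $F$-disjoint solutions coincides with the minimum over solutions extending the already-fixed $U$—is the delicate point, and I would handle it by carefully re-reading the forbidding-branching-set definition to confirm that ``disjoint from $F$'' together with $\Q \subseteq \T'-U$ exactly encodes the constraint $S \cap I(\T)=U$. The remaining obligation, that the correct answer is propagated through the $\min$ in Step~5 and correctly stored in $\TableComp$ for future Step~0 lookups, is then routine given Lemma~\ref{lemma-equivalent-calls}.
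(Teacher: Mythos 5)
Your overall architecture (induction, soundness/completeness of the branching step, and Lemma~\ref{lemma-equivalent-calls} to justify Step~0) mirrors the paper's proof, but your induction rests on a claim that is unjustified and in general false. You induct on $|I \setminus I(\T)|$ and assert $I(\T') \supsetneq I(\T)$ ``since $\Q$ is a nonempty minimal obstruction in $\P-U$.'' Nonemptiness gives no such thing. All one can deduce is that $\Q$ is \emph{not contained in} $\T-U$ (because $\T-U$ is solvable while minimal obstructions are not); but $\Q$ may be \emph{incomparable} with $\T-U$ --- for instance, $\T-U$ may reach further than $\Q$ in one agent's list while $\Q$ reaches further in another's (compare slant shapes of sizes $(10,10,7)$ and $(7,10,10)$) --- in which case $\T'$ does not contain $\T$, your measure does not decrease, and the induction hypothesis cannot be invoked for the recursive call. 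The paper avoids this entirely by inducting on $|I \setminus U|$: every $Y$ in the branching set is a partial solution of the unsolvable prefix $\Q$, hence $Y \neq \emptyset$, so $|I \setminus (U \cup Y)| < |I \setminus U|$ holds unconditionally. You should switch to that measure.

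Two further steps are wrong as stated, even though the conclusions they aim at are correct. First, the identity $I(\Q)=I(\T')\setminus I(\T)$ is false: what holds is $I(\Q)=I(\T')\setminus U$, and items of $I(\T)\setminus U$ can perfectly well occur in $\Q$ --- indeed, if they could not, the whole ``forbidding'' mechanism would be pointless. The disjointness $Y \cap I(\T)=\emptyset$ must instead be derived from $Y$ belonging to a branching set forbidding $I(\T)\setminus U$, together with $Y \subseteq I(\Q) \subseteq I\setminus U$. Second, your resolution of the ``delicate point'' in the completeness argument is incorrect: for a minimum-size solution $U'$ that is merely disjoint from $F$, minimality does \emph{not} force $U \subseteq U'$ (if $|U|=2$, a smallest $F$-avoiding solution might meet $I(\T)$ in only one item of $U$). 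The correct resolution, which the paper uses, is that the branching set in Step~3 is taken with respect to the profile $\P-U$, not $\P$: sets $S'$ solving $\P-U$ and avoiding $F=I(\T)\setminus U$ correspond exactly to sets $S=S'\cup U$ solving $\P$ with $S \cap I(\T)=U$, with $|S|=|S'|+|U|$, so the unconstrained minimum over $F$-avoiding solutions of $\P-U$ coincides, after this shift, with the constrained minimum you need. Finally, you omit the case where Step~3 rejects because $\Y=\emptyset$; its correctness is exactly the ``no solution disjoint from $F$'' branch of the definition, and it is part of what the lemma claims.
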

\begin{proof}
Observe that it suffices to prove the claim for those cases 
when algorithm $\MinDel$ does not return a solution in Step 0: 
its correctness in the remaining cases (so when a solution contained 
in $\TableComp$ for a strongly equivalent input is found and returned in Step 0)
follows from Lemma~\ref{lemma-equivalent-calls}. 

We are going to prove the lemma by induction on $|I \setminus U|$. 
Clearly, if $|I \setminus U|=0$, then $\P-U$ is an empty instance, 
and hence is trivially solvable. 
Assume now that $I \setminus U \neq \emptyset$, and that 
$\MinDel$ returns a correct output for any input $(\T_0,U_0)$ 
with $|I \setminus U_0|<|I \setminus U|$. 

First, if the algorithm returns $U$ in Step 1, then this is clearly correct. 
Also, if it stops and rejects in Step 3 because it finds that the branching set $\Y$ forbidding $I(\T) \setminus U$ for the minimal obstruction $\Q$ is empty, 
then this, by the definition of a branching set (forbidding $I(\T) \setminus U$) 
and by the soundness of the algorithm of Lemma~\ref{lemma-small-forbidding-branchingset}, means that there is no solution $S$ for $\P-U$ disjoint from $I(\T) \setminus U$ holds. But then we also know that there is no solution $S$ for $\P$ for which $S \cap I(\T)=U$ holds. 
Hence this step is correct as well. 

Therefore, we can assume that the algorithm's output is $S_{Y^\star}$ for some 
$Y^\star \in \Y$, where $S_{Y^\star}=\MinDel(\T', U \cup Y^\star)$ and 
$\T'$ is the profile in $\P$ for which $\T'-U=\Q$. 
As $|I \setminus (U \cup Y)| < |I \setminus U|$ for any $Y \in \Y$, 
the induction hypothesis implies that $\MinDel$ 
runs correctly on all inputs $(\T', U \cup Y)$, $Y \in \Y$.
Hence, $S_{Y^\star}$ is a solution for $\P$ for which 
$I(\T') \cap S_{Y^\star} = U \cup Y^\star$, and since $Y^\star$ is contained 
in a branching set forbidding $I(\T) \setminus U$, we get 
$I(\T) \cap  S_{Y^\star} = U$. 

It remains to argue that if $S$ is a solution for $\P$ with 
$I(\T) \cap S= U$, then $|S_{Y^\star}| \leq S$. 
Observe that $S \setminus U$ is a solution for $\P-U$ forbidding $I(\T) \setminus U$. 
By the definition of a branching set forbidding $I(\T) \setminus U$
and the correctness of Lemma~\ref{lemma-small-forbidding-branchingset}, 
we know that there must exist a solution $S'$ for $\P-U$ disjoint from $I(\T) \setminus U$ and having size $|S'| \leq |S \setminus U|$ 
for which $S' \cap I(\Q) \in \Y$. 
Let $Y'=S' \cap I(\Q)$. 

Note that $S' \cup U$ is a solution for $\P$ with $|S' \cup U| \leq |S|$, and 
clearly, $(S' \cup U) \cap I(\T')= U \cup Y'$.
Using again the induction hypothesis, we
get that $\MinDel(\T',U \cup Y')$ returns a solution $S_{Y'}$ for $\P$ for which 
$S_{Y'} \cap I(\T')= U \cup Y'$ and has minimum size among all such solutions, 
so in particular, $|S_{Y'}| \leq |S' \cup U| \leq |S|$. 
Thus, we get that the output of 
$\MinDel(\T,U)$ (that is, $S_{Y^\star}$) has size at most $|S|$. 
This proves our claim. 
Therefore, we get that if $\MinDel$ returns an output in Step 5, then this 
output is correct. 
\qed
\end{proof}

Lemma~\ref{lemma-soundness} immediately gives us an algorithm to solve PID.
Let $\T_{\emptyset}$ denote the empty prefix of our input profile $\P$, 
i.e. $\P[0,0,0]$; then $\MinDel(\T_{\emptyset},\emptyset)$ returns 
a solution $S$ for $\P$ of minimum size; 
we only have to compare $|S|$ with the desired solution size $k$.  

The next lemma states that $\MinDel$ gets called polynomially many times.

\begin{lemma}
\label{lemma-tablesize-bounded}
Throughout the run of algorithm $\MinDel$ initally called with input $(\T_{\emptyset},\emptyset)$, the table $\TableComp$ contains $O(|I|^{7})$ entries. 
\end{lemma}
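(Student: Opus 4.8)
The plan is to bound the number of table entries by the number of distinct \emph{strong equivalence classes} of inputs that the recursion can encounter. The first observation is that Step~0 guarantees that $\TableComp$ never stores two strongly equivalent inputs sharing the same prefix: whenever $\MinDel(\T,U)$ is called and a strongly equivalent entry $(\T,U')$ is already present, the call returns in Step~0 without storing anything new. Since entries are added only in Steps~1 and~5, and since two entries with different prefixes are automatically distinct, the number of entries is at most $\sum_{\T} c(\T)$, where the sum ranges over all prefixes $\T$ that appear during the run and $c(\T)$ denotes the number of strong equivalence classes of partial solutions for $\T$.

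First I would bound the number of relevant prefixes. A prefix $\T=\P[i_a,i_b,i_c]$ is completely determined by its size, and each coordinate lies in $\{0,1,\dots,|I|\}$; hence there are $O(|I|^3)$ prefixes in total. It then suffices to establish $c(\T)=O(|I|^4)$ for each fixed $\T$, which yields the claimed bound $O(|I|^7)$.

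To bound $c(\T)$, recall that the strong equivalence class of a partial solution $U$ is determined by the triple consisting of $|U|$, the set $U\cap\delta(\T)$, and the deficiency pattern of $\T-U$. The value $|U|$ contributes a factor of $O(|I|)$. The heart of the argument is to show that the remaining data, namely the pair consisting of $U\cap\delta(\T)$ and the deficiency pattern of $\T-U$, takes at most $O(|I|^3)$ distinct values as $U$ ranges over partial solutions for $\T$. For this I would analyze the structure of the deficiency pattern: writing $\Q=\T-U$, it is the set of triples $(\mysize(\Q\cap\R),\mydef(\Q\cap\R),I(\Q\cap\R)\cap\delta(\Q))$ with $\R$ ranging over straight or slant prefixes intersecting $\Q$. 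Up to the constantly many shape choices, such prefixes $\R$ are parameterized by a single index, so as this index grows the intersection $\Q\cap\R$ grows monotonically, and both its deficiency and its boundary-intersection change only at a bounded number of breakpoints. I would then argue that recording these breakpoints, a constant number of integers each in $\{0,\dots,|I|\}$, together with $U\cap\delta(\T)$ determines the entire pattern, so that this data is captured by $O(1)$ integer parameters and hence takes $O(|I|^3)$ distinct values.

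The main obstacle I anticipate is precisely this last step: proving that the deficiency pattern of a solvable prefix, together with $U\cap\delta(\T)$, is encoded by only a constant number of integer thresholds. This requires a careful monotonicity analysis of how $\mydef(\Q\cap\R)$ and $I(\Q\cap\R)\cap\delta(\Q)$ evolve as $\R$ slides, leaning on Lemma~\ref{lemma-minobst-shape} to restrict $\R$ to straight and slant shapes and on Lemma~\ref{lemma-boundary} to control the boundary appearing in the third coordinate. Once this encoding is established, multiplying the $O(|I|^3)$ prefixes, the $O(|I|)$ values of $|U|$, and the $O(|I|^3)$ values of the remaining signature data gives $O(|I|^7)$ entries, as desired.
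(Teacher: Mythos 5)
Your proposal starts the same way as the paper's proof (Step~0 prevents storing two strongly equivalent entries with the same prefix; there are $O(|I|^3)$ prefixes; aim for $O(|I|^4)$ classes per prefix), but it then diverges in a way that breaks the argument. You try to bound $c(\T)$, the number of strong equivalence classes over \emph{all} partial solutions $U$ for $\T$. That quantity is not polynomially bounded: by the definition of strong equivalence, any two partial solutions with different values of $U \cap \delta(\T)$ lie in different classes, and for an arbitrary prefix $\T$ the boundary $\delta(\T)$ can contain $\Theta(|I|)$ items (Lemma~\ref{lemma-boundary} bounds the boundary only for \emph{minimal obstructions}), with exponentially many subsets realizable as $U\cap\delta(\T)$ by partial solutions. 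The missing idea is that one must only count the inputs the recursion actually generates: examining Steps 2--4, every stored entry $(\T,U)$ with $U \neq \emptyset$ arises as $U = U_0 \cup Y_U$ where $\T - U_0$ is a minimal obstruction and $|Y_U| \leq 2$. This structural fact is what makes the count work. Since deleting items cannot enlarge a prefix's boundary, Lemma~\ref{lemma-boundary} gives $|\delta(\T)\setminus U| \leq |\delta(\T-U)| \leq 3$, so $U\cap\delta(\T)$ is determined by which $\leq 3$ elements of $\delta(\T)$ it omits --- that is where the $O(|I|^3)$ factor really comes from. The same fact, combined with Lemma~\ref{lemma-minobst-shape}, shows that only constantly many straight or slant prefixes can \emph{intersect} $\T-U$ (a near-minimal-obstruction), so each deficiency pattern contains only constantly many triples and takes only constantly many values within a group. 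Neither bound holds for arbitrary partial solutions.

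Your fallback mechanism does not repair this. For a general $\T-U$, the straight/slant prefixes intersecting it are parameterized by an index ranging over an interval of length $\Theta(|I|)$, and $\mydef(\Q\cap\R)$ along that range can rise and fall $\Theta(|I|)$ times, so the claimed ``bounded number of breakpoints'' is unsubstantiated and false in general. Moreover, the final step is arithmetically inconsistent: data ``captured by $O(1)$ integer parameters'' in $\{0,\dots,|I|\}$ yields $|I|^{O(1)}$ possibilities, not $O(|I|^3)$, and the set $U\cap\delta(\T)$ is not encoded by those integers at all. To fix the proof you should replace the ambient count over all partial solutions by the paper's count over recursion-generated inputs, and derive both the $O(|I|^3)$ bound on boundary intersections and the $O(1)$ bound on deficiency patterns from the fact that $\T-(U\setminus Y_U)$ is a minimal obstruction with $|Y_U|\le 2$.
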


\begin{proof}
Let us consider table $\TableComp$ at a given moment 
during the course of algorithm $\MinDel$, initially called 
with the input $(\T_{\emptyset},\emptyset)$ (and having possibly performed 
several recursive calls since then). 
Let us fix a prefix $\T$. 
We are going to give an upper bound on the maximum size 
of the family $\U_{\T}$ of partial solutions~$U$ for $\T$ for which 
$\TableComp$ contains the entry $(\T,U)$. 

By Step 0 of algorithm $\MinDel$, no two sets in $\U_{\T}$ are strongly equivalent.
Recall that if $U_1$ and $U_2$, both in $\U_{\T}$, are not strongly
equivalent, then either $|U_1| \neq |U_2|$, 
or $\delta(\T) \cap U_1 \neq \delta(\T) \cap U_2$, 
or $\T-U_1$ and $\T-U_2$ have different deficiency patterns. 
Let us partition the sets in $\U_{\T}$ into \emph{groups}: 
we put $U_1$ and $U_2$ in the same group, 
if $|U_1|=|U_2|$ and $\delta(\T) \cap U_1 = \delta(\T) \cap U_2$. 

Examining Steps 2--4 of algorithm $\MinDel$, we can observe that 
if $U \neq \emptyset$, then for some $Y_U \subseteq U$ of size 1 or 2, 
the prefix $\T-(U \setminus Y_U)$ is a minimal obstruction~$\Q_U$. 
Since removing items from a prefix cannot increase the size of its boundary, 
Lemma~\ref{lemma-boundary} implies that the boundary of $\T-U$ 
contains at most 3 items. 
We get $|\delta(\T) \setminus U| \leq |\delta(\T-U)| \leq 3$, 
from which it follows that $\delta(\T) \cap U$ is a subset of $\delta(\T)$ 
of size at least $|\delta(\T)|-3$. 
Therefore, the number of different values that $\delta(\T) \cap U$ 
can take is $O(|I|^3)$. Since any $U \in \U_{\T}$ has size at most $|I|$, 
we get that there are $O(|I|^4)$ groups in $\U_{\T}$. 
Let us fix some group  $\U_g$ of $\U_{\T}$.
We are going to show that 
the number of different deficiency patterns for $\T-U$ 
where $U \in \U_g$ is constant. 

Recall that the deficiency pattern of $\T-U$ contains triples of the form
$$(\mysize(\R^{\cap}),\mydef(\R^{\cap}),I(\R^{\cap}) \cap \delta(\T-U))$$ 
where $\R^{\cap}$ is the intersection of $\T-U$ 
and some prefix $\R$ of $\P-U$
with a slant or a straight shape. 

First observe that by the definition of a group, 
$\mysize(\T-U_1)=\mysize(\T-U_2)$ holds for any $U_1, U_2 \in \U_g$. 
Let us fix an arbitrary $U \in \U_g$.
Since $\T-U$ can be obtained by deleting 1 or 2 items 
from a minimal obstruction,  
Lemma~\ref{lemma-minobst-shape} implies that 
there can only be a constant number of prefixes $\R$ of $\P-U$ 
which intersect $\T-U$ and have a slant or a straight shape;
in fact, it is not hard to check that 
the number of such prefixes $R$ is at most 5 for any given $\T-U$. 
Therefore, the number of values taken by the first coordinate 
$\mysize(\R^{\cap})$ of any triple in the deficiency pattern of $\T-U$ is constant.
Since $\T-U$ has the same size for any $U \in \U_g$, we also get that 
these values coincide for any $U \in U_g$. 
Hence, we obtain that (A) the total number of values the first coordinate of any triple in the deficiency pattern of $\T-U$ for any $U \in \U_g$ can take is constant.

Let $\R_{\cap}$ be the intersection of $\T-U$ and 
some prefix of straight or slant shape. 
By definition, $\R_{\cap}$ is contained in $\Q_U$. 
By $|Y_U| \leq 2$, there are only a constant number of positions which are contained in $\Q_U$ but not in $\R_{\cap}$. From this both
$||I(\R_{\cap})| - |I(\Q_U)||=O(1)$ and $||S(\R_{\cap})| - |S(\Q_U)||=O(1)$ follow.
As $\Q_U$ is a minimal obstruction, we also have $|I(\Q_U)|=|S(\Q_U)|-1$, 
implying that (B) the deficiency 
$\mydef(\R_{\cap})= |S(\R_{\cap})|-|I(\R_{\cap})|$ can only take a constant 
number of values too; note that we have an upper bound on $|\mydef(\R_{\cap})|$ that holds for any $U \in \U_g$. 
Considering that $I(\R_{\cap}) \cap \delta(\T-U)$ is the subset of $\delta(\T-U)$, 
and we also know $|\delta(\T-U)| \leq 3$, we obtain that 
(C) the set $I(\R_{\cap}) \cap \delta(\T-U)$ can take at most $2^3$ values 
(again, for all $U \in U_g$). 

Putting together the observations (A), (B), and (C), it follows that 
the number of different deficiency patterns of $\T-U$
taken over all $U \in \U_{g}$ is constant. This implies $|\U_{\T}|=O(|I|^4)$.
Since there are $O(|I|^3)$ prefixes $\T$ of $\P$, we arrive at the conclusion 
that the maximum number of entries in $\TableComp$ is $O(|I|^7)$.
\qed
\end{proof}

\begin{theorem}
\label{thm-3agents-poly}
\textsc{Proportional Item Deletions} for three agents can be solved in time~$O(|I|^{9+{\omega}})$ where $\omega<2.38$ is the exponent of the
best matrix multiplication algorithm.
\end{theorem}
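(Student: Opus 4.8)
The plan is to separate correctness from the running-time estimate. Correctness is already in hand: by Lemma~\ref{lemma-soundness} the call $\MinDel(\T_{\emptyset},\emptyset)$ returns a minimum-size solution $S$ for $\P$, so the algorithm solves PID by simply comparing $|S|$ with $k$. What remains is to show that this top-level call runs in $O(|I|^{9+\omega})$ time. I would write the total running time as the product of (a) the number of invocations of $\MinDel$, and (b) the work performed inside a single invocation, excluding the recursive calls that it spawns.

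For (a) I would bound the recursion tree. Every invocation that does not already return in Step~0 is recorded as a fresh entry of $\TableComp$; by Lemma~\ref{lemma-tablesize-bounded} there are $O(|I|^{7})$ such \emph{real} invocations. Each real invocation issues at most $|\Y|=O(1)$ recursive calls in Step~4, since the branching set produced in Step~3 has constant size by Lemma~\ref{lemma-small-forbidding-branchingset}; invocations that return in Step~0 or Step~1 are leaves and recurse no further. Hence the call tree has $O(|I|^{7})$ internal nodes, each of constant out-degree, and therefore $O(|I|^{7})$ invocations in total.

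For (b) I would walk through the steps and identify the bottleneck as Step~3. Starting from the branching set of Corollary~\ref{cor-branchingset}, which lists $O(|I|^{2})$ candidate deletion sets of size at most two, one tests each for being a partial solution via a single bipartite matching on a graph with $O(|I|)$ vertices, costing $O(|I|^{\omega})$ with Mucha--Sankowski; this gives $O(|I|^{2+\omega})$, while the subsequent domination pruning of Lemma~\ref{lemma-small-branchingset} is cheaper. Steps~1 and~2 cost $O(|I|^{\omega})$ and $O(|I|^{3})$ by Lemma~\ref{lem-propalloc}, and Step~0 scans only those entries of $\TableComp$ sharing the current prefix $\T$, i.e.\ the family $\U_{\T}$ of size $O(|I|^{4})$, running an $O(1)$-time strong-equivalence test against each (the size, the boundary trace $U\cap\delta(\T)$, and the constant-size deficiency pattern being precomputed once per invocation in polynomial time and stored with every entry).

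Multiplying, the $O(|I|^{7})$ invocations times the $O(|I|^{2+\omega})$ Step~3 cost give $O(|I|^{9+\omega})$, which is the claimed bound. The aggregate contribution of Step~0 is $O(|I|^{7})\cdot O(|I|^{4})=O(|I|^{11})$, and this is absorbed because $\omega\ge 2$ yields $11\le 9+\omega$; all remaining steps are dominated as well. The main obstacle I anticipate lies in part (b) rather than (a): one must genuinely verify that strong equivalence is testable in constant time, which forces the deficiency pattern to have constant size and to be computable in polynomial time, and one must confirm that the repeated Step~0 scans do not overtake the branching-set cost. Both points hinge on the earlier $O(|I|^{4})$ bound on $|\U_{\T}|$ from Lemma~\ref{lemma-tablesize-bounded} and on the $O(1)$ branching-set size from Lemma~\ref{lemma-small-forbidding-branchingset}.
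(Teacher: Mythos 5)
Your proposal is correct and follows essentially the same route as the paper: correctness via Lemma~\ref{lemma-soundness}, the $O(|I|^{7})$ bound on the number of invocations via Lemma~\ref{lemma-tablesize-bounded}, and the per-call bottleneck of $O(|I|^{2+\omega})$ in Step~3 via Lemma~\ref{lemma-small-forbidding-branchingset}, multiplied together to give $O(|I|^{9+\omega})$. The only divergence is your more careful treatment of Step~0 --- the paper simply asserts it costs $O(1)$ per call --- but your aggregate $O(|I|^{7})\cdot O(|I|^{4})=O(|I|^{11})$ bound for the table scans is indeed absorbed by $O(|I|^{9+\omega})$ since $\omega\ge 2$, so the final bound is unaffected.
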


\begin{proof}
By Lemma~\ref{lemma-soundness}, we know that algorithm $\MinDel(\T_{\emptyset},\emptyset)$ returns a solution for $\P$ of minimum size, solving PID. 
We can use Lemma~\ref{lemma-tablesize-bounded} to bound the running time of 
$\MinDel(\T_{\emptyset},\emptyset)$:
since~$\TableComp$ contains~$O(|I|^7)$ entries,
we know that the number of recursive calls to~$\MinDel$ is also~$O(|I|^7)$.
It remains to give a bound on the time necessary for the computations performed 
by $\MinDel$, when not counting the computations performed in recursive calls.
Clearly, Step 0 takes $O(1)$ time. 
Steps 1 and 2 can be accomplished in $O(|I|^3)$ time, 
as described in Lemma~\ref{lem-propalloc}.
Using Lemma~\ref{lemma-small-forbidding-branchingset}, Step 3 can be performed in
$O(|I|^{2+{\omega}})$ time. Since the cardinality of the branching set found in 
Step 3 is constant, Steps 4 and 5 can be performed in linear time. 
This gives us an upper bound of~$O(|I|^{9+{\omega}})$ on the total running time.
\qed
\end{proof}

We remark that in order to obtain Theorem~\ref{thm-3agents-poly}, it is not crucial to compute a branching set of constant size in Step 3: a polynomial running time would still follow even if we used a branching set of quadratic size. Thus, for our purposes, it would be sufficient to use an extension of Corollary~\ref{cor-branchingset} that takes forbidden items into account (an analog of Lemma~\ref{lemma-small-forbidding-branchingset}) in Step 3. Therefore, the ideas of Section~\ref{sec:domination} -- the notion of domination between partial solutions, leading to Lemma~\ref{lemma-small-branchingset} -- can be thought of as a speed-up that offers a more practical algorithm. 

\section{Conclusion}

In Section~\ref{sec:threeagents} we have shown that \textsc{Proportionality by Item Deletion} is polynomial-time solvable if there are only three agents. On the other hand, if the number of agents is unbounded, then PID becomes \textsf{NP}-hard, 
and practically intractable already when we want to delete only a small number of items, as shown by the $\mathsf{W[3]}$-hardness result of Theorem~\ref{thm-w3-hardness}. 

The complexity of PID remains open for the case when the number of agents is a constant greater than 3. Is it true that for any constant $n$, there exists a polynomial-time algorithm that solves PID in polynomial time for $n$ agents? 
If the answer is yes, then can we even find an FPT-algorithm with respect to the parameter $n$? If the answer is no (that is, if PID turns out to be $\mathsf{NP}$-hard for some constant number of agents), then can we at least give an FPT-algorithm with parameter $k$ for a constant number of agents (or maybe with combined parameter $(k,n)$)?

Finally, there is ample space for future research if we consider different control actions (such as adding or replacing items), different notions of fairness, or different models for agents' preferences.


\begin{thebibliography}{10}

\bibitem{AGMW-2015}
H.~Aziz, S.~Gaspers, S.~Mackenzie, and T.~Walsh.
\newblock Fair assignment of indivisible objects under ordinal preferences.
\newblock {\em Artificial Intelligence}, 227:71--92, 2015.

\bibitem{ASW-ijcai}
H.~Aziz, I.~Schlotter, and T.~Walsh.
\newblock Control of fair division.
\newblock In {\em IJCAI 2016: Proceedings of the 25th International Joint
  Conference on Artificial Intelligence}, pages 67--73, 2016.

\bibitem{bartholdi-control}
J.~J. {Bartholdi, III}, C.~A. Tovey, and M.~A. Trick.
\newblock How hard is it to control an election?
\newblock {\em Mathematical and Computer Modelling}, 16(8--9):27--40, 1992.

\bibitem{brams-AL}
S.~J. Brams, D.~M. Kilgour, and C.~Klamler.
\newblock Two-person fair division of indivisible items: An efficient,
  envy-free algorithm.
\newblock {\em Notices of the {AMS}}, 61(2):130--141, 2014.

\bibitem{chen-w1-vs-eth-stronger}
J.~Chen, X.~Huang, I.~A. Kanj, and G.~Xia.
\newblock Strong computational lower bounds via parameterized complexity.
\newblock {\em Journal of Computer and System Sciences}, 72(8):1346--1367,
  2006.

\bibitem{DowneyFellows95}
R.~G. Downey and M.~R. Fellows.
\newblock Fixed-parameter tractability and completeness {I}: Basic results.
\newblock {\em SIAM Journal on Computing}, 24(4):873--921, 1995.

\bibitem{DowneyFellows13}
R.~G. Downey and M.~R. Fellows.
\newblock {\em Fundamentals of Parameterized Complexity}.
\newblock Texts in Computer Science. Springer, London, 2013.

\bibitem{FlumGrohe05}
J.~Flum and M.~Grohe.
\newblock Model-checking problems as a basis for parameterized intractability.
\newblock {\em Logical Methods in Computer Science}, 1(1:2):1--36, 2005.

\bibitem{FlumGrohe06}
J.~Flum and M.~Grohe.
\newblock {\em Parameterized Complexity Theory}, volume XIV of {\em Texts in
  Theoretical Computer Science. An EATCS Series}.
\newblock Springer Verlag, Berlin, 2006.

\bibitem{Hopcroft-Karp-1973}
J.~E. Hopcroft and R.~M. Karp.
\newblock An {$n^{5/2}$} algorithm for maximum matchings in bipartite graphs.
\newblock {\em SIAM Journal on Computing}, 2(4):225--231, 1973.

\bibitem{IPZ-eth}
R.~Impagliazzo, R.~Paturi, and F.~Zane.
\newblock Which problems have strongly exponential complexity?
\newblock {\em Journal of Computer and System Sciences}, 63(4):512--530, 2001.

\bibitem{lipton-proportionality}
R.~J. Lipton, E.~Markakis, E.~Mossel, and A.~Saberi.
\newblock On approximately fair allocations of indivisible goods.
\newblock In {\em EC 2004: Proc. of the 5th ACM Conference on Electronic
  Commerce}, pages 125--131, 2004.

\bibitem{mucha-sankowski-max-matching}
M.~Mucha and P.~Sankowski.
\newblock Maximum matchings via gaussian elimination.
\newblock In {\em FOCS 2014: Proc. of the 45th Annual IEEE Symposium on
  Foundations of Computer Science}, pages 248--255, 2004.

\bibitem{nguyen-vohra-hrc}
T.~Nguyen and R.~Vohra.
\newblock Near feasible stable matchings.
\newblock In {\em EC 2015: Proc. of the Sixteenth ACM Conference on Economics
  and Computation}, pages 41--42, 2015.

\bibitem{segal-halevi-cakecutting}
E.~Segal{-}Halevi, A.~Hassidim, and Y.~Aumann.
\newblock Waste makes haste: Bounded time protocols for envy-free cake cutting
  with free disposal.
\newblock In {\em AAMAS 2014: In Proc. of the 14th International Conference on
  Autonomous Agents and Multi-Agent Systems}, pages 901--908, 2015.

\bibitem{thulasiraman-handbook}
K.~Thulasiraman, S.~Arumugam, A.~Brandst{\"a}dt, and T.~Nishizeki.
\newblock {\em Handbook of Graph Theory, Combinatorial Optimization, and
  Algorithms}.
\newblock Chapman \& Hall/CRC Computer and Information Science Series. CRC
  Press, 2015.

\end{thebibliography}

\end{document}